\newtheorem{theorem}{Theorem}
    \crefname{lemma}{lemma}{lemmas}
\newtheorem{conjecture}{Conjecture}
    \crefname{conjecture}{conjecture}{conjectures}
\newtheorem{corollary}[theorem]{Corollary}
    \crefname{corollary}{corollary}{corollaries}
\newtheorem{fact}[theorem]{Fact}
    \crefname{fact}{fact}{facts}
\newtheorem{definition}{Definition}
    \crefname{definition}{definition}{definitions}
\newenvironment{xalign}[1][]{
    \subequations
    \label{#1}
    \align
}
{
    \endalign
    \endsubequations
}
\renewcommand{\maketitle}{\bgroup\setlength{\parindent}{0pt}
\begin{flushleft}
  \Large {\@title} \linebreak
  
  \normalsize \@author \linebreak
\end{flushleft}\egroup
}
\title{From hard problems in representation theory to hardness of cloning} %
\title{On the hardness of cloning and connections to representation theory}
\author[1]{Vojt\v{e}ch Havl\'{i}\v{c}ek}
\author[2]{Chinmay Nirkhe}
\affil[1]{\small IBM Quantum, \emph{Yorktown Heights, N.Y.}}
\affil[2]{\small University of Washington, \emph{Seattle, Wash.}}
\date{}
\begin{document}

\maketitle{}

\begin{abstract}
    The states accepted by a quantum circuit are known as the \emph{witnesses} for the quantum circuit's satisfiability. The assumption $\BQP \neq \QMA$ implies that no efficient algorithm exists for constructing a witness for a quantum circuit from the circuit's classical description. However, a similar complexity-theoretic lower bound on the computational hardness of cloning a witness is not known. 
    In this note, we derive a conjecture about cloning algorithms for \emph{maximally entangled states} over hidden subspaces which would imply that no efficient algorithm exists for cloning witnesses (assuming $\BQP \not \supseteq \NP$).
    The conjecture and result follow from connections between quantum computation and representation theory; specifically, the relationship between quantum state complexity and the complexity of computing Kronecker coefficients.
\end{abstract}

\section{The computational problem of cloning}

We say that a transformation from $n$-qubits to $2n$-qubits clones a quantum state $\ket{\psi} \in \qubits{n}$ if $\ket{\psi} \mapsto \ket{\psi} \otimes \ket{\psi}$. The no-cloning theorem proves that there is no single transformation for cloning a pair of non-orthogonal quantum states due to the linearity of quantum mechanics. For example, there is no linear transformation that simultaneously maps:
\begin{equation}
\ket{0} \mapsto \ket{0} \otimes \ket{0} \quad \text{and} \quad \ket{+} \mapsto \ket{+} \otimes \ket{+}, \text{ where: } \ket{+} \defeq \frac{\ket{0} + \ket{1}}{\sqrt{2}}.
\end{equation}

In this note, we focus on the computational hardness of cloning and do not consider information-theoretic issues with it. This is because we study the complexity of \emph{witness cloning}, the task of efficiently preparing the state $\ket{\psi} \otimes \ket{\psi}$ given as input a single copy of $\ket{\psi}$ and the classical description of a $\poly(n)$-sized quantum circuit $V$ that uniquely accepts $\ket{\psi}$. %

With access to the verifier $V$, the task is relieved of its information-theoretic barriers because a computationally unbounded algorithm exists for deriving the second copy of the state $\ket{\psi}$ from the description of $V$ without touching the initial copy. Therefore, the computational task of witness cloning is necessarily as easy as the task of constructing witnesses. The goal of this paper is to provide evidence that the task of cloning is not significantly easier than the aforementioned tasks -- i.e. the task of witness cloning does not have a polynomial-time quantum algorithm.

The definition can be generalized to verifiers $V$ which accept multiple orthogonal states. In general, we say a transformation clones the witnesses of $V$ if any witness of $V$ is mapped to a $2n$-qubit state such that the reduced density matrices on the first $n$ qubits and the last $n$ qubits would be individually accepted by $V$. This is because we regard the witness as a ``witness to the satisfiability of $V$''. In other words, if $\Ll$ is the subspace of witnesses of $V$, the transformation maps $\Ll$ to $\Ll \otimes \Ll$. It is important to note that for verifiers $V$ which accept multiple orthogonal states, the transformation may produce entangled states $\in \Ll \otimes \Ll$ due to linearity.
\medskip

\begin{center}\emph{How hard is the task of cloning witnesses, given access to the classical description of the verifier? Can we prove complexity-theoretic lower bounds for this task?} \end{center}

\subsection{The computational hardness of cloning}%

One technique for proving a lower-bound is showing that a \emph{black-box} oracle for witness cloning implies an unexpected complexity class collapse\footnote{Since the cloning task is not a decision problem, we cannot directly connect the complexity of this task to a decision complexity class.}, such as $\BQP = \QMA$. Equivalently, given access to a quantum oracle for cloning witnesses, there exists a polynomial-time quantum algorithm for deciding any $\QMA$ problem. We do not know such a statement and it is plausible that there exists an efficient witness cloning algorithm for all verification circuits. If an efficient witness cloning algorithm were to exist, it would imply that the complexity of constructing multiple copies of a witness is entirely concentrated in the complexity of constructing the first copy.

Despite not knowing complexity-theoretic evidence, we do know \emph{oracle} separations proving the computational hardness of witness cloning~\cite{aaronson-christiano} as well as hardness results based on cryptographic primitives~\cite{quantum-money-from-knots,broadbent2023uncloneablequantumadvice,zhandry-quantum-lightning,zhandry-group-actions,bostanci2024generalquantumdualityrepresentations}. Most of these are studied in the context of \emph{public-key quantum money} schemes, which can be thought of as an \emph{average-case} hardness analog of the worst-case hardness results we study in this work. We briefly survey these results.

\paragraph{Aaronson and Christiano's oracle separation} 
Aaronson and Christiano~\cite{aaronson-christiano} consider an oracle defined as the indicator function for $(\{0\} \times A) \sqcup (\{1\} \times A^\perp)$ where $A \subset \FF_2^n$ is a linear subspace of dimension $n/2$ and $A^\perp$ is the space of vectors orthogonal to it. Given access to such an oracle, there exists an efficient algorithm which only accepts the subset state $\ket{A}$ by measuring the input in the standard and Hadamard bases and comparing to the oracles for $A$ and $A^\perp$. Furthermore, they prove, using an adaptation of the adversary/hybrid method, that any algorithm cloning the state $\ket{A}$ for all such linear subspaces $A$ must make an exponential number of queries to the oracle, thereby proving a rigorous exponential lower bound on the task of cloning.
Zhandry~\cite{zhandry-quantum-lightning} shows that the oracle $(\{0\} \times A) \sqcup (\{1\} \times A^\perp)$ can be suitably hidden assuming \emph{quantum indistinguishability obfuscation} in such a way that one can prove hardness of the cloning task without the oracle but requires a purported and non-standard cryptographic assumption (that is yet to be broken).

\paragraph{Zhandry's quantum money}

In~\cite{zhandry-group-actions}, Zhandry constructs a quantum money scheme based on abelian group actions. He proved the security of the scheme in the \emph{generic group action} (GGA) model, assuming a new but natural strengthening of the discrete log assumption on group actions. Since the discrete logarithm problem has a query-efficient algorithm \cite{ettinger04}, the GGA model cannot be used to give unconditional hardness results. Zhandry's abelian group action scheme is an inspiration for the non-abelian group action results in this work. We also note a concurrent result by Bostanchi, Nehoran, and Zhandry~\cite{bostanci2024generalquantumdualityrepresentations} generalizing Zhandry's work to non-abelian group actions for constructions of quantum money, lightning, and fire, cryptographic notions quantifying hardness-of-cloning. We highly applaud this work and think that it is a big step towards understanding cryptographic hardness of these primitives. The specific goal of this work is to make progress on proving the hardness of cloning from a complexity-theoretic assumption such as $\BQP \not\supseteq \NP$.

\section{Our results}

We make progress towards a complexity lower bound for witness cloning. Specifically, we show that an efficient quantum algorithm for cloning witnesses implies $\BQP \supseteq \NP$, assuming a conjecture about cloning maximally entangled states over hidden subspaces. We believe that this conjecture (\Cref{conj:hidden-epr}) will prove to be significantly more tractable than the general problem.%

We identify a family of $\NP$-hard maximally entangled states over hidden subspaces, defined shortly. This is due to a nascent connection between quantum computation and the complexity of Kronecker coefficients. Our starting point is the observation of Bravyi~\emph{et. al.}~\cite{kronecker-coeff} that Kronecker coefficients are a $\mathsf{\#BQP}$ quantity. $\mathsf{\#BQP}$ is one possible quantum analog of $\SHARPP$ -- informally, a counting problem $\fn{f}{\bits^*}{\NN}$ is in $\mathsf{\#BQP}$ if there exists a polynomial time map $x \mapsto \Pi_x$ where $\Pi_x$ is an efficiently computable quantum projector such that $f(x) = \tr (\Pi_x)$. Kronecker coefficients, on the other hand, are positive integers derived from the symmetric group which describe irrep multiplicities in tensor products of irreps. Formally, for irreps $\rho^\mu, \rho^\nu, \rho^\lambda$ of the symmetric group $S_n$ (each irrep can be described by an $n$-box Young diagram), the Kronecker coefficients $\{m_{\mu \nu \lambda}\}$ are the non-negative integers satisfying the following equation:
\begin{equation}
    \rho^\mu \otimes \rho^\nu \equiv \bigoplus_\lambda \II_{m_{\mu \nu\lambda}}\otimes \rho_\lambda.
\end{equation}
Deciding if $m_{\mu \nu \lambda} > 0$ is $\NP$-hard by the result of Ikenmeyer, Mulmuley, and Walter~\cite{kronecker-np-hard}. Using this hardness result, for any $\NP$-problem, we construct an efficiently computable quantum projector $\Gamma_{\mu \nu \lambda}$ such that $\tr(\Gamma_{\mu \nu \lambda}) \in \{0, 1\}$ with $\tr(\Gamma_{\mu \nu \lambda}) = 1$ iff the $\NP$-problem is satisfiable. The unique state $\ket{\psi}$ such that $\ev{\Gamma_{\mu \nu \lambda}}{\psi} = 1$ is the maximally entangled state over a hidden subspace. For dimensions $d_1 \leq d_2$, and a subspace $\Pi \subseteq \CC^{d_2}$ of dimension $d_1$ and any basis $\ket*{b_1}, \ldots, \ket*{b_{d_1}}$ for $\Pi$, the corresponding \emph{maximally entangled witness state for the subspace} $\Pi$ is
\begin{equation}
        \ket*{\Phi_\Pi} \defeq \frac{1}{\sqrt{d_1}} \sum_{i = 1}^{d_1} \ket*{b_i} \otimes \ket*{b_i^*}.
        \label{eq:hidden-subspace}
\end{equation}
The state is independent of the choice of basis. If deciding if $\Pi \neq 0$ is $\NP$-hard, then we call the subspace \emph{hidden}. This is because there is no quantum polynomial-time algorithm for generating states in $\Pi$ assuming $\BQP \not \supseteq \NP$.

This produces our main result: if witness cloning is efficient for all verifiers, then the cloner must clone the maximally entangled states corresponding to a family of $\NP$-hard hidden subspaces. We believe that the maximally entangled nature of the witness prohibits the existence of an efficient cloner. We pose a conjecture~(\Cref{conj:hidden-epr}) which asserts that if an efficient cloner for the maximally entangled state over hidden subspace $\Pi$ exists, then there exists an efficient circuit generating a state in $\Pi$. Together, this conjecture and $\BQP \not \supseteq \NP$ imply that witness cloning is hard. 

We believe~\Cref{conj:hidden-epr} is a significant step towards lower bounding the complexity of witness cloning. Near the end of this document, we discuss the challenges of proving~\Cref{conj:hidden-epr} and why we believe it to be true.

\section{Definitions and problem statements}

\begin{definition}[Verification circuit]
A quantum circuit $V$ acting on $n$ qubits with $m$ ancilla is a verification circuit with completeness $c$ and soundness $s$ if the Hermitian operator
\begin{equation}
\H \defeq \qty(\II_2^{\otimes n} \otimes \bra{0}^{\otimes n}) \cdot V^\dagger \cdot \qty(\ketbra{1}_1 \otimes \II_2^{\otimes {n+m-1}} ) \cdot V \cdot \qty(\II_2^{\otimes n} \otimes \ket{0}^{\otimes n}),
\end{equation}
has no eigenvalues in the range $(s, c)$.
$$ \Qcircuit @C=1em @R=.7em { 
&\multigate{5}{V^\dagger} & \push{~\ketbra{1}~} \qw & \multigate{5}{V~} & \qw \\
&\ghost{V^\dagger} & \qw & \ghost{V~} & \qw\\
& \ghost{V^\dagger} & \qw & \ghost{V~} & \qw\\
\lstick{\bra{0}} & \ghost{V^\dagger} & \qw & \ghost{V~} & \rstick{\ket{0}} \qw \\
\lstick{\bra{0}} & \ghost{V^\dagger} & \qw & \ghost{V~} & \rstick{\ket{0}} \qw \\
\lstick{\bra{0}} & \ghost{V^\dagger} & \qw & \ghost{V~} & \rstick{\ket{0}} \qw
} $$
Equivalently, let $\Ll$ be the span of eigenvectors of eigenvalue $\geq c$ and $\Ll^\perp$ be the span of eigenvectors of eigenvalues $\leq s$. Then $\qubits{n} \cong \Ll \oplus \Ll^\perp$. We call $\Ll$ the space of accepting states by $V$ and $\Ll^\perp$ the space of rejecting states.
\label{def:verification-ckt}
\end{definition}

\begin{definition}[Cloning]
For a given verification circuit\footnote{In the specific case of the $\QMA$-complete problem of estimating the minimum eigenvalue of a local Hamiltonian instance, the cloning transformation definition can be interpreted as: 
\begin{definition}[Cloning of groundspaces]
Let $\H$ be a local Hamiltonian describing the evolution of a system of $n$-qubits. We say that a quantum transformation \emph{clones the groundspace of $\H$} if it maps any groundstate of $\H$ (an eigenvector of minimal eigenvalue) to a groundstate of the Hamiltonian $\H \otimes \II + \II \otimes \H$ which describes a $2n$-qubit system.
\end{definition} By standard arguments about Hamiltonian energy estimation, we note that any local Hamiltonian $0 \preccurlyeq \H \preccurlyeq \II $ can be transformed into a verification circuit with the property that $c - s \geq \gamma$, the spectral gap of the local Hamiltonian.} $V$ acting on $n$ qubits, we say that an $n$-qubit to $2n$-qubit transformation clones the accepting space of $V$ if it maps $\Ll$ to $\Ll \otimes \Ll$.
\end{definition}

As stated by Broadbent, Karvonen, and Lord~\cite{broadbent2023uncloneablequantumadvice}, the task of witness cloning can be viewed as the ``$1 \rightarrow 2$'' analog of the task of producing witness to quantum verification circuits which is the ``$0 \rightarrow 1$'' task.

\subsection{Computational task}

The computational task of cloning is to construct two accepting states from one accepting state $\ket{\psi}$ and the description of a verification circuit $V$. We will call this \emph{witness cloning} for $V$. We believe that there is no efficient transformation for cloning all verification circuits $V$ for completeness $2/3$ and soundness $1/3$.

\begin{conjecture}[The hardness of witness cloning]
There is no uniform quantum polynomial time algorithm which successfully clones the accepting space of every verification circuit $V$ of completeness $2/3$ and soundness $1/3$.
\end{conjecture}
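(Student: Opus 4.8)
This is stated as a conjecture, and the plan is not to settle it outright but to reduce it to the cleaner \Cref{conj:hidden-epr} together with the standard hardness assumption $\BQP \not\supseteq \NP$. Precisely, I would prove the conditional statement: \emph{assuming \Cref{conj:hidden-epr}, no uniform quantum polynomial-time algorithm clones the accepting space of every verification circuit of completeness $2/3$ and soundness $1/3$}. The proof is by contraposition --- I would show that a universal witness cloner, fed into \Cref{conj:hidden-epr}, yields a polynomial-time quantum algorithm for an $\NP$-hard problem, hence $\BQP \supseteq \NP$.

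The first step is to manufacture the right verification circuit from an arbitrary $\NP$ instance. Using the $\mathsf{\#BQP}$ characterization of Kronecker coefficients~\cite{kronecker-coeff} and the $\NP$-hardness of deciding $m_{\mu\nu\lambda} > 0$~\cite{kronecker-np-hard}, construct in polynomial time the efficiently implementable projector $\Gamma_{\mu\nu\lambda}$ with $\tr(\Gamma_{\mu\nu\lambda}) \in \{0,1\}$, equal to $1$ exactly when the instance is satisfiable, whose $1$-eigenspace is then the line spanned by the maximally entangled witness state $\ket*{\Phi_\Pi}$ over the hidden subspace $\Pi$. Repackaging $\Gamma_{\mu\nu\lambda}$ in the single-measurement form of \Cref{def:verification-ckt} gives a verification circuit $V$ with completeness $1$ and soundness $0$ --- in particular a legitimate $(2/3,1/3)$ verification circuit, since a $\{0,1\}$-valued $\H$ has no eigenvalue in $(1/3,2/3)$ --- whose accepting space satisfies $\Ll = \mathrm{span}\{\ket*{\Phi_\Pi}\}$ when the instance is satisfiable and $\Ll = 0$ otherwise.

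The second step runs the reduction. If $\mathcal{C}$ is a universal witness cloner, then specialized to $V$ it maps $\Ll$ into $\Ll \otimes \Ll$, i.e. it clones the maximally entangled witness state over the hidden subspace $\Pi$. Crucially, \Cref{conj:hidden-epr} is an existential/uniform statement --- an efficient cloner implies an efficient \emph{generator} --- so from the descriptions of $\mathcal{C}$ and of $V$ (both of which the algorithm possesses) it produces, in polynomial time and without ever holding a copy of $\ket*{\Phi_\Pi}$, a circuit $U_V$ whose output lies in $\Pi$ whenever $\Pi \neq 0$. The decision procedure for the $\NP$ instance is then: build $V$ and $U_V$, run $U_V$, apply $V$ to its output, and accept iff $V$ accepts. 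If the instance is satisfiable, $U_V$'s output lies in $\Ll$ and $V$ accepts with probability $1$; if it is unsatisfiable, $\Ll = 0$ and the soundness-$0$ verifier rejects every state with probability $1$. This is an (in fact exact) polynomial-time quantum algorithm for an $\NP$-hard problem, so $\BQP \supseteq \NP$, contradicting the assumed separation.

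The genuine obstacle is \Cref{conj:hidden-epr} itself, which I would not attempt to settle here and which is discussed separately: the guiding intuition is that each half of $\ket*{\Phi_\Pi}$ is maximally mixed on $\Pi$, so the copy reveals essentially nothing about $\Pi$ beyond its support, and a successful cloner therefore cannot merely be ``reading off'' its input but must in effect recompute $\Pi$ --- hence be convertible into a generator for it. Within the reduction the two delicate points are (i) extracting from Kronecker positivity, a priori a claim about a possibly exponentially large multiplicity $m_{\mu\nu\lambda}$, an \emph{exact} and efficiently implementable projector of trace in $\{0,1\}$ whose nonzero eigenspace is precisely the single state $\ket*{\Phi_\Pi}$; and (ii) checking that $\Gamma_{\mu\nu\lambda}$ is a genuine projector realizable in the \Cref{def:verification-ckt} form, so that the resulting $V$ has completeness $1$ and soundness $0$ and the final verification step needs no amplification.
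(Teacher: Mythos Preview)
Your approach is essentially the paper's: exhibit verification circuits coming from Kronecker coefficients whose unique witness is a hidden maximally entangled state, apply \Cref{conj:hidden-epr} to the universal cloner restricted to those circuits to obtain a generator for $\Pi$, and use that generator to decide Kronecker positivity. The paper likewise does not prove the conjecture outright; its formal result is the conditional statement you describe, phrased for the specific $(\mu\otimes\nu,\lambda)$ verifier.

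The two items you flag as ``delicate'' are precisely where your sketch departs from the actual construction, and your guesses about how they resolve are off. First, the verifier is \emph{not} an exact projector with soundness $0$: the internal state test is one-bit phase estimation, which accepts any input with probability at least $1/2$, and the combined test has completeness $1$ but soundness only $\leq 8/9$ (this is the content of \Cref{cor:passing-all-three-tests}, and is why \Cref{conj:hidden-epr} is stated with $s\leq 8/9$ rather than $s\leq 1/3$). So to feed it to a universal $(2/3,1/3)$ cloner you must first amplify the gap --- routine here since completeness is perfect --- and the final decision step is probabilistic, not exact. Second, the accepting space has dimension exactly $m_{\mu\nu\lambda}$, which need not lie in $\{0,1\}$; the paper does not engineer a trace-$\{0,1\}$ projector for arbitrary instances but instead restricts to the promise $m_{\mu\nu\lambda}\in\{0,1\}$ (hence $\mathsf{UNIQUE-NP}$) and then invokes Valiant--Vazirani~\cite{VALIANT198685} to recover $\BQP\supseteq\NP$. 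With these two corrections your plan coincides with the paper's argument.
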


\section{Representation theory notations and definitions}
Let $G$ be a finite group and let $\Ii_G$ be the set of distinct irreducible representations -- i.e. \textit{irreps} -- for $G$. Notationally, $\lambda \in \Ii_G$ will parameterize the irreps of $G$ with $\fn{\rho^\lambda}{G}{\CC^{d_\lambda \times d_\lambda}}$ being the corresponding irrep of dimension $d_\lambda$. 
Let $\chi^\rho(\cdot) = \tr(\rho(\cdot))$ be the character of a representation $\rho$. When $\rho = \rho^\lambda$ is an irrep, for brevity, we use the notation $\chi^\lambda$ to refer to $\chi^{\rho^\lambda}$. 
Two important (reducible) representations are the left- and right-regular representations, defined by their action on a $|G|$-dimensional vector space with orthonormal basis $\lbrace \ket{g} \rbrace_{g\in G}$:
\begin{equation}
    \rho_L(g) = \sum_{h \in H} \ketbra{gh}{h}, \qquad \rho_R(g) = \sum_{h \in H} \ketbra*{hg}{h}, \qquad \rho_C = \rho_L \rho_R = \rho_R \rho_L. \label{eq:lr-reg-reps}
\end{equation}
Every representation $\sigma$ of a finite group has a unique decomposition irreps that is defined by the multiplicity coefficients $m_{\sigma \lambda}$:
\begin{equation}
    \sigma \cong \bigoplus_{\lambda \in \Ii_G} \II_{m_{\sigma \lambda}} \otimes  \rho^\lambda.
\end{equation}
It is a standard fact that the left- and right-regular representations have decompositions with multiplicity coefficients $m_{\rho \lambda} = d_\lambda$. See Ref.~\cite{serre1977linear} for additional details. The unitary transformation that decomposes both left- and right regular-representations is the Fourier transform. For any finite group $G$, the Fourier transform unitary is given by
\begin{equation}
    \mathrm{FT} \defeq \sum_{\pi \in G} \sum_{\lambda \in \Ii_G} \sum_{i,j=1}^{d_\lambda} \sqrt{\frac{d_\lambda}{\abs{G}}} \rho^{\lambda}_{ij}(\pi) \ket{\lambda, i, j}\bra{\pi}. \label{eq:qft}
\end{equation}
Here $\ket{\lambda}$ is some classical representation of the irrep; in the case of the symmetric group, the irreps can be represented by Young tableaus.
For certain groups, including the symmetric group, application of the Fourier transform is efficient~ \cite{beals97, moore2003genericquantumfouriertransforms}. %

\paragraph{Representations of symmetric groups}
We describe all the transformations in this note with a finite group $G$, but our attention will eventually focus on $G = S_n$, the symmetric group --- i.e., the group of permutations on $n$ elements. For the symmetric group, the set of irreps, $\Ii_{S_n}$, can be parametrized by partitions $\lambda \vdash n$ or equivalent Young tableau of $n$ boxes. When the group is $S_n$ and the representation being studied $\sigma = \rho^\mu \otimes \rho^\nu$, then the coefficients are abbreviated as $m_{\mu \nu \lambda}$ and are known as Kronecker coefficients.

\paragraph{The computational complexity of Kronecker coefficients}
Bürgisser and Ikenmeyer~\cite{burgisserIkenmeyer2008} proved
that the Kronecker coefficient $m_{\mu \nu \lambda}$ is $\#\P$-hard to compute from input $(\mu, \lambda, \nu)$ for $\mu, \nu, \lambda \vdash n$. This choice of parametrization for the input size fixes the order of the underlying permutation group and is commonly referred to as \emph{unary} encoding. As mentioned previously, the problem of deciding if $m_{\mu \nu \lambda} > 0$ is $\NP$-hard by \cite{kronecker-np-hard}. See Ref.~\cite{panova2023computationalcomplexityalgebraiccombinatorics}, \cite{kronecker-coeff} and the references therein for additional discussion.

\section{A construction of hard verification algorithms from representation theory}

For every finite group $G$ and representation $\fn{\sigma}{G}{\CC^{D \times D}}$, there exists a \emph{positive operator valued measurement} (POVM) called the \emph{weak Fourier sampling}\footnote{Note that weak Fourier sampling is referred to as coarse Fourier sampling by~\cite{bostanci2024generalquantumdualityrepresentations}.} POVM related to measurements of a state in the Fourier basis.

\begin{restatable}{fact}{factwfs}(Weak Fourier Sampling)
\label{fact:weak-fourier-sampling}
Let $\sigma: G \rightarrow \CC^{D \times D}$ be a representation of a finite group $G$. For an irrep $\lambda \in \Ii_G$, let: 
\begin{equation}
    \Xi_\lambda = \Xi_\lambda^{(\sigma)} \defeq \frac{d_\lambda}{\abs{G}} \sum_{g \in G} \chi^\lambda(g)^* \sigma(g).
    \label{eq:wfs-eq}
\end{equation}
The set of operators $\lbrace \Xi_\lambda \rbrace_{\lambda \in \Ii_G}$ are orthogonal projectors and the corresponding measurement is called \emph{weak Fourier Sampling}.  This measurement can be implemented efficiently when both the conditional application of $\sigma$ and the quantum Fourier transform (\eqref{eq:qft}) for $G$ can be implemented efficiently.
\end{restatable}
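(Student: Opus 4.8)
The plan is to recognize \Cref{fact:weak-fourier-sampling} as the combination of two standard facts: (i) $\Xi_\lambda$ is the orthogonal projector onto the $\lambda$-isotypic component of $\CC^D$ under $\sigma$, so that $\{\Xi_\lambda\}_{\lambda \in \Ii_G}$ is a complete projective measurement, and (ii) the textbook ``Fourier sampling'' circuit realizes the POVM $\{\Xi_\lambda\}$. I would prove (i) by decomposing $\sigma$ into irreps and invoking Schur's lemma, and (ii) by writing out the circuit and computing its output statistics with the Schur orthogonality relations. At the outset I would flag that $\sigma$ is to be taken unitary --- as it must be for the $\Xi_\lambda$ to even be Hermitian, and as is automatic here since $\sigma$ is applied by a quantum circuit.

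\textbf{Step (i): the $\Xi_\lambda$ are the isotypic projectors.} Choosing each $\rho^\mu$ unitary and decomposing $\sigma \cong \bigoplus_{\mu \in \Ii_G} \II_{m_{\sigma\mu}} \otimes \rho^\mu$, it suffices to compute $\Xi_\lambda$ block by block: on the $\mu$-block it acts as $\II_{m_{\sigma\mu}} \otimes M_{\lambda\mu}$, where $M_{\lambda\mu} \defeq \tfrac{d_\lambda}{\abs{G}}\sum_{g} \chi^\lambda(g)^* \rho^\mu(g)$. Because $\chi^\lambda$ is a class function, conjugating this sum by $\rho^\mu(h)$ and reindexing the summation variable shows $M_{\lambda\mu}$ commutes with $\rho^\mu(h)$ for every $h \in G$; as $\rho^\mu$ is irreducible, Schur's lemma forces $M_{\lambda\mu} = c_{\lambda\mu}\II_{d_\mu}$, and taking traces together with $\tfrac{1}{\abs{G}}\sum_g \chi^\lambda(g)^*\chi^\mu(g) = \delta_{\lambda\mu}$ gives $c_{\lambda\mu} = \delta_{\lambda\mu}$. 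Hence $\Xi_\lambda$ is exactly the orthogonal projector onto the $\lambda$-isotypic subspace; such projectors are Hermitian, idempotent, satisfy $\Xi_\lambda\Xi_{\lambda'} = 0$ for $\lambda \neq \lambda'$, and sum to $\II_D$ since the isotypic decomposition of $\CC^D$ is exhaustive. (Alternatively, and without changing basis, one can expand $\Xi_\lambda\Xi_{\lambda'}$ directly and apply the character convolution identity to get $\Xi_\lambda\Xi_{\lambda'} = \delta_{\lambda\lambda'}\Xi_\lambda$, use the regular character $\sum_\lambda d_\lambda\chi^\lambda(g) = \abs{G}\,[g=e]$ for $\sum_\lambda \Xi_\lambda = \II_D$, and deduce Hermiticity of $\Xi_\lambda$ from unitarity of $\sigma$ and $\chi^\lambda(g^{-1}) = \chi^\lambda(g)^*$.)

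\textbf{Step (ii): the circuit.} Given a register holding $\ket\psi \in \CC^D$, the circuit adjoins an ancilla of dimension $\abs{G}$ spanned by $\{\ket g\}_{g \in G}$, prepares it in the uniform superposition $\abs{G}^{-1/2}\sum_g \ket g$, applies the conditional group action $\ket g \otimes \ket\psi \mapsto \ket g \otimes \sigma(g^{-1})\ket\psi$, applies the Fourier transform $\mathrm{FT}$ of \eqref{eq:qft} to the ancilla, and measures its $\ket\lambda$ component in the computational basis. Expanding the resulting state and using the Schur orthogonality of matrix coefficients, $\tfrac{1}{\abs{G}}\sum_g \rho^\lambda_{ij}(g)\,\rho^{\lambda'}_{kl}(g)^* = \tfrac{1}{d_\lambda}\delta_{\lambda\lambda'}\delta_{ik}\delta_{jl}$, a short computation shows that outcome $\lambda$ occurs with probability $\ev{\Xi_\lambda}{\psi}$ (more generally $\tr(\Xi_\lambda\rho)$ for an input $\rho$) and that the input register is left supported on the range of $\Xi_\lambda$; thus the circuit realizes the POVM $\{\Xi_\lambda\}$. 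Efficiency is immediate from the hypotheses: preparing the uniform superposition over $G$ and applying $\mathrm{FT}$ are efficient by assumption, the conditional group action is efficient by assumption (it is just the inverse of conditional $\sigma$), and reading off $\ket\lambda$ is a computational-basis measurement.

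\textbf{Main obstacle.} There is no deep obstacle --- this is a textbook fact --- and essentially the only point requiring care is the labeling convention just used: if one instead conditions on $\sigma(g)$ and keeps \eqref{eq:qft} exactly as written, the label read off the ancilla corresponds to $\Xi_{\lambda^*}$ for the \emph{dual} irrep rather than to $\Xi_\lambda$, and one recovers $\Xi_\lambda$ either by conditioning on $\sigma(g^{-1})$ as above or by complex-conjugating the Fourier transform; either way the efficiency claim is unaffected. The only remaining things to be careful about are the reduction to unitary $\sigma$ --- needed both for ``conditional $\sigma$'' to be a legal quantum operation and for the Schur-orthogonality manipulations --- and tracking the normalization factors $\sqrt{d_\lambda/\abs{G}}$ through the expansion.
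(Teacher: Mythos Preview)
Your proposal is correct and matches the paper's proof in spirit and in the circuit construction (the paper likewise prepares the uniform superposition over $G$, applies the controlled representation, applies the Fourier transform, and then verifies $E_\lambda^\dagger E_\lambda = \Xi_\lambda$). The only organizational difference is in Step~(i): the paper verifies $\Xi_\lambda \Xi_\mu = \delta_{\lambda\mu}\Xi_\lambda$ by a direct double-sum expansion using the matrix-coefficient form of Schur orthogonality and then checks $\sum_\lambda \tr(\Xi_\lambda) = D$, whereas your primary argument first passes to the block-diagonal basis and identifies each $\Xi_\lambda$ as the isotypic projector via the intertwiner form of Schur's lemma plus character orthogonality. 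You already list the paper's direct computation as your parenthetical alternative, so the two proofs are essentially the same; your block-diagonal route is arguably cleaner since it immediately yields Hermiticity, idempotence, orthogonality, and completeness in one stroke, while the paper's route avoids the need to invoke the change of basis. Your remark about the $\sigma(g)$ versus $\sigma(g^{-1})$ convention is well taken and consistent with the paper, which uses $\mathrm{FT}^\dagger$ on the ancilla together with controlled $\rho(g)$ to the same effect.
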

\noindent The proof of this fact is given in~\Cref{sec:omitted-pfs}. 
It is convenient to work with the basis in which the representation $\fn{\sigma}{G}{\CC^{D \times D}}$ is block-diagonal:
\begin{equation}
\sigma = \bigoplus_{\lambda} 
\II_{m_{\sigma \lambda}} \otimes \rho^\lambda.
\label{eq:irrep-break-down}
\end{equation}
In this basis, $\Xi_\lambda$ can be seen as the projector onto the subspace corresponding to the $\lambda$-blocks; for notational simplicity, we also use $\Xi_\lambda$ as the name of the subspace it projects onto. It will be clear from context which is being used. In general, generating a state in $\Xi_\lambda$ is computationally hard due to known results on the hardness of Kronecker coefficients~\cite{kronecker-np-hard}.

\subsection{The complexity of generating states in $\Xi_\lambda$}

By considering $\Xi_\lambda$ in the basis given by~\eqref{eq:irrep-break-down}, we see that each block is $d_\lambda \times d_\lambda$ in size and the number of blocks is the multiplicity $m_{\sigma \lambda}$. Therefore, it follows that the dimensionality of the projector $\Xi_\lambda$ is precisely $m_{\sigma \lambda} \cdot d_\lambda$.

It follows that $\Xi_\lambda \neq 0$ if and only if $m_{\sigma \lambda} \neq 0$. In the particular case that $G = S_n$ and $\sigma = \rho^{\mu} \otimes \rho^{\nu}$, then, as previously stated, $m_{\sigma \lambda} = m_{\mu \nu \lambda}$, the Kronecker coefficient. %
Deciding if the Kronecker coefficient is positive is $\NP$-hard \cite{kronecker-np-hard} and this is equivalent to deciding whether $\Xi_\lambda = \Xi_\lambda^{(\mu \nu)} \neq 0$ or not. Since it is widely believed that $\BQP$ computations cannot decide $\NP$-hard problems,  there is no efficient quantum algorithm for \emph{producing} a state $\ket{\psi} \in \Xi_\lambda^{(\mu \nu)}$ from classical input $(\mu, \nu, \lambda)$. 

However, we can show that an efficient quantum algorithm exists for \emph{verifying} a state $\ket{\psi} \in \Xi_\lambda^{(\mu\nu)}$. This is due to~\Cref{fact:weak-fourier-sampling} combined with the fact that there exists an efficient classical implementation for $\sigma = \rho^{\mu} \otimes \rho^{\nu}$ for any two irreps $\mu, \nu \vdash n$.
The corresponding weak Fourier sampling POVM defined in Fact~\ref{fact:weak-fourier-sampling} has an efficient classical implementation for $\sigma = \rho^\mu \otimes \rho^\nu$:
\begin{restatable}{fact}{factefficientprojector}
There exists a polynomial-sized quantum circuit that implements the measurement $\lbrace \Xi_\lambda \rbrace_{\lambda \vdash n}$ from \Cref{fact:weak-fourier-sampling} for an $S_n$ representation $\sigma = \rho^\mu \otimes \rho^\nu$, where $\mu, \nu \vdash n$ label two irreps $\rho^\mu, \rho^\nu \in \Ii_{S_n}$.
\label{fact:efficientprojector}
\end{restatable}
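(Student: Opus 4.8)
The plan is to reduce the implementation of $\{\Xi_\lambda\}$ for $\sigma = \rho^\mu \otimes \rho^\nu$ to three efficient subroutines, each of which is known to be efficient for $S_n$. By \Cref{fact:weak-fourier-sampling}, the measurement $\{\Xi_\lambda\}$ can be implemented efficiently whenever (a) the conditionally-controlled application of $\sigma$ is efficient and (b) the quantum Fourier transform for $G = S_n$ is efficient. The QFT over $S_n$ is efficient by the results of Beals and of Moore--Russell--Schulman cited in the excerpt (\cite{beals97, moore2003genericquantumfouriertransforms}), so the entire task collapses to exhibiting an efficient controlled implementation of $g \mapsto \rho^\mu(g) \otimes \rho^\nu(g)$ on the $d_\mu d_\nu$-dimensional representation space.

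First I would fix an explicit, computationally convenient model for the irreps $\rho^\mu$: I would use Young's orthogonal (seminormal) representation, in which each generator — the adjacent transposition $s_i = (i\ i{+}1)$ — acts on the Young basis indexed by standard Young tableaux by a block-diagonal matrix whose blocks are $1 \times 1$ (entry $\pm 1$) or $2 \times 2$ rotations, with angles determined by the axial distances of boxes $i$ and $i{+}1$ in the tableau. Crucially, each such $2\times 2$ block couples only a tableau $T$ to the tableau $T'$ obtained by swapping $i$ and $i{+}1$, and the relevant axial distance is computable in classical polynomial time from the (polynomial-size) description of $T$. This gives a polynomial-size quantum circuit implementing the controlled unitary $c\text{-}\rho^\mu(s_i)$ for each generator $s_i$: one reads the content/axial-distance data into an ancilla register (reversibly, in poly time), applies the controlled rotation conditioned on it, and uncomputes. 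Since the generators $s_1,\dots,s_{n-1}$ generate $S_n$ and one only ever needs to apply $\rho^\mu(g)$ for the specific group elements arising in the Fourier-sampling circuit (which are already expressed, or can be expressed, as poly-length words in the $s_i$), composing these controlled generator circuits gives efficient controlled access to $\rho^\mu$, and identically to $\rho^\nu$. Running the $\mu$-circuit on one register and the $\nu$-circuit on a second register, both controlled on the same group-element register, realizes $c\text{-}(\rho^\mu \otimes \rho^\nu) = c\text{-}\sigma$ efficiently. Feeding this into \Cref{fact:weak-fourier-sampling} together with the efficient $S_n$-QFT yields the claimed polynomial-size circuit for $\{\Xi_\lambda\}_{\lambda \vdash n}$.

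The main obstacle is the second half of the previous paragraph: ensuring that the group elements on which one must apply $\sigma$ in the weak-Fourier-sampling circuit admit short words in the adjacent transpositions, and more generally that the whole pipeline — reading tableau data, computing axial distances, applying and uncomputing the controlled rotations — composes to genuinely polynomial size and depth without hidden blowups in the dimension $d_\mu d_\nu$ (which can itself be exponential, but is handled because we only ever store a basis \emph{index}, i.e., a single standard Young tableau, on $O(n \log n)$ qubits, never the full vector). A secondary subtlety is bookkeeping the normalization and the $\ket{\lambda}$-register encoding so that the projector one measures is exactly $\Xi_\lambda$ as defined in \eqref{eq:wfs-eq} and not a basis-change-conjugated variant; this is routine but must be checked against the Fourier-transform conventions in \eqref{eq:qft}. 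I expect the argument to go through cleanly because every ingredient — Young's orthogonal form, classical computation of axial distances, the efficient $S_n$-QFT, and \Cref{fact:weak-fourier-sampling} — is already established in the literature, so the work is in assembling them rather than proving anything genuinely new.
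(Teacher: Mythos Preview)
Your proposal is correct and follows essentially the same approach as the paper: the paper also uses Young's orthogonal form (there called the Young--Yamanouchi basis), implements each adjacent transposition via classically computed axial distances, notes that any permutation decomposes into $O(n^2)$ adjacent transpositions, tensors two such circuits on a shared control register, and invokes the efficient $S_n$-QFT together with \Cref{fact:weak-fourier-sampling}. Your worry about short words is a non-issue precisely because of the $O(n^2)$ decomposition into adjacent transpositions, which the paper states explicitly.
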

\noindent A proof is provided in~\Cref{sec:omitted-pfs}.

\subsection{An improved construction}

The projector $\Xi_\lambda$ has dimension $m_{\sigma \lambda} d_\lambda$ because $\Xi_\lambda$ can be decomposed into $m_{\sigma \lambda}$ blocks each of dimensionality $d_\lambda$. Next, we come up with an additional test such that there is exactly one accepting state per block. We do this by adding a second test which forces a maximally entangled state between the original register and a second register of equal size. 

We first lift the representation $\fn{\sigma}{G}{\CC^{D \times D}}$ to a representation $\fn{\sigma'}{G}{\CC^{D^2 \times D^2}}$ defined by $\sigma \otimes \II_D$, a representation acting on two registers of equal size. The previously constructed notions of weak Fourier sampling still hold even when we apply the weak Fourier sampling to the first (left) of the two registers. The corresponding set of states passing the weak Fourier sampling of the left register will be states of the form:
\begin{equation}
    \ket*{\Psi_\lambda^{(\phi)}} \defeq \frac{1}{\sqrt{A_\lambda^{(\phi)}}} \sum_{h \in G} \chi^\lambda(h)^* \left( \sigma(h) \otimes \II_D \right) \ket{\phi} \label{eq:passing-improved-construction}
\end{equation}
where $A_\lambda^{(\phi)} \geq 0$ is a normalization constant. We will henceforth denote the set of states that pass the weak Fourier sampling test as $\Gamma_\lambda$.
In our improved construction, we would like to add an additional check which enforces that the states in~\eqref{eq:passing-improved-construction} require $\ket{\phi}$ equaling $\ket*{\Phi^+}$, the $D$-dimensional maximally entangled state between the two registers. Our construction cannot achieve this goal, but approaches this goal by forcing maximal entanglement between the two registers per $\lambda$-block. We define the following test to be included in addition to weak Fourier sampling. We call it ``internal state testing'' since it enforces a restriction on the ``internal'' states $\ket{\phi}$ allowed from~\eqref{eq:passing-improved-construction}.
\begin{mdframed}
\textbf{Internal state testing for $\ket{\psi} \in \CC^{D^2}$:}
\begin{enumerate}
    \item Generate the state $\mathrm{FT} \ket{0} \otimes \ket{\psi} = \displaystyle \frac{1}{\sqrt{\abs{G}}} \sum_{k \in G} \ket{k} \otimes \ket{\psi}$.
    \item Run the following 1-bit phase estimation circuit\footnote{It is a standard fact that the 1-bit phase estimation circuit accepts with probability $\half + \half \abs{\ev{U}{\psi}}^2$.} and accept if the measurement is 0. Here, 
    \begin{equation}\displaystyle U = \sum_{k \in G} \ketbra{k} \otimes \sigma(k) \otimes \sigma(k)^*.\end{equation}
    $$
    \Qcircuit @C=1em @R=1.6em {
        & \lstick{\ket{0}} & \gate{H} & \ctrl{1} & \gate{H} & \meter \\
        & \lstick{\mathrm{FT} \ket{0}} & {/} \qw & \multigate{2}{U} & \qw & \qw \\
        & \lstick{} & {/} \qw & \ghost{U} & \qw & \qw \\
        & \lstick{} & {/} \qw & \ghost{U} & \qw & \qw 
            \inputgroup{3}{4}{1.35em}{\quad\ket{\psi}} \\
    }
    $$
\end{enumerate}
\end{mdframed}

This test exhibits a robust characterization. To describe the characterization, we first note that projector $\Xi_\lambda$ can be expressed as the direct sum of smaller projectors $\Xi_{\lambda,j}$ for $j = 1, \ldots, m_{\sigma \lambda}$ corresponding to the $\lambda$-blocks due to the decomposition expressed in~\eqref{eq:irrep-break-down}. 
Let $M_\lambda$ be the span of the maximally entangled states across each of the $\Xi_{\lambda,j}$ subspaces (as defined in~\eqref{eq:hidden-subspace}):
\begin{equation}
    M_\lambda \defeq \mathrm{span} \ \qty{ \ket*{\Phi_{\Xi_{\lambda,j}}} }_{j = 1, \ldots, m_{\sigma \lambda}}.
\end{equation}
The space $M_\lambda$ is a subspace of dimension $m_{\sigma \lambda}$ of $\Xi_\lambda \otimes \Xi_\lambda$ with one state per $\lambda$-block.

\begin{theorem}[Characterization of internal states]
If the internal state test passes with probability $1-\eps$ for a state in $\Gamma_\lambda$, then the internal state $\in \CC^{D \times D}$ is $\leq 2 \sqrt{2\eps}$-close to the subspace $M_\lambda$.
\label{thm:characterization-of-internal-test}
\end{theorem}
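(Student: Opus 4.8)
The plan is to analyze the acceptance probability of the internal state test in terms of the overlap $\abs{\ev{U}{\psi}}^2$ and then characterize which states in $\Gamma_\lambda$ maximize this overlap. First I would recall that by the standard 1-bit phase estimation fact quoted in the footnote, the test accepts with probability $\tfrac12 + \tfrac12 \abs{\bra{\Theta} U \ket{\Theta}}^2$ where $\ket{\Theta} = \mathrm{FT}\ket{0} \otimes \ket{\psi} = \abs{G}^{-1/2}\sum_{k} \ket{k}\otimes\ket{\psi}$. So acceptance with probability $1-\eps$ forces $\abs{\bra{\Theta}U\ket{\Theta}}^2 \geq 1 - 2\eps$, i.e. $\abs{\bra{\Theta}U\ket{\Theta}} \geq \sqrt{1-2\eps} \geq 1 - 2\eps$. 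Expanding, $\bra{\Theta}U\ket{\Theta} = \abs{G}^{-1}\sum_{k\in G} \bra{\psi}(\sigma(k)\otimes\sigma(k)^*)\ket{\psi}$, so this quantity is (up to normalization) exactly $\tfrac{1}{\abs{G}}\sum_k \chi^{\sigma\otimes\sigma^*}$-type average applied to $\ket\psi$ — more precisely it equals $\ev{P}{\psi}$ where $P \defeq \abs{G}^{-1}\sum_{k} \sigma(k)\otimes\sigma(k)^*$ is the projector onto the $G$-invariant subspace of $\sigma \otimes \sigma^*$ (invariance of the Haar-type average makes $P$ idempotent and self-adjoint). Thus the test, restricted to its "interesting" part, is precisely a 1-bit measurement of the projector $P$ onto $\mathrm{Inv}(\sigma\otimes\sigma^*)$.

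Next I would identify this invariant subspace explicitly. Writing $\sigma \cong \bigoplus_\lambda \II_{m_{\sigma\lambda}}\otimes \rho^\lambda$ as in~\eqref{eq:irrep-break-down}, Schur's lemma gives that the $G$-invariant vectors in $\rho^\lambda \otimes (\rho^{\lambda'})^*$ are one-dimensional if $\lambda = \lambda'$ (spanned by the maximally entangled state on that irrep factor) and zero otherwise. Hence $\mathrm{Inv}(\sigma \otimes \sigma^*)$ is spanned by the states $\ket*{\Phi}$ that are maximally entangled across matched irrep blocks — i.e. across pairs $(\Xi_{\lambda,j}, \Xi_{\lambda',j'})$ with $\lambda = \lambda'$, ranging over $j,j' \in \{1,\dots,m_{\sigma\lambda}\}$. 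Now I bring in the hypothesis that $\ket\psi \in \Gamma_\lambda$, i.e. $\ket\psi$ has already passed weak Fourier sampling on the left register and so lies in $\Xi_\lambda \otimes \CC^D$ (the left register is supported on $\lambda$-blocks only). Intersecting $\Gamma_\lambda$ with $\mathrm{Inv}(\sigma\otimes\sigma^*)$: the left register is confined to $\Xi_\lambda = \bigoplus_{j}\Xi_{\lambda,j}$, so only the invariant vectors maximally entangling $\Xi_{\lambda,j}$ with $\Xi_{\lambda,j'}$ survive; but combined with the block-diagonal structure on the \emph{right} register coming from $P$ being built out of $\sigma(k)^*$ on a register also carrying the $\sigma$-decomposition, only the "diagonal" $j = j'$ pairings remain. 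This is exactly the span $M_\lambda = \mathrm{span}\{\ket*{\Phi_{\Xi_{\lambda,j}}}\}_j$. So on the subspace $\Gamma_\lambda$, the projector $P$ restricts to the projector onto $M_\lambda$.

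Finally I would run the error propagation. Let $\Pi_{M_\lambda}$ be the projector onto $M_\lambda$. We have shown $P\ket\psi = \Pi_{M_\lambda}\ket\psi$ for $\ket\psi \in \Gamma_\lambda$, hence $\ev{\Pi_{M_\lambda}}{\psi} = \ev{P}{\psi} \geq 1 - 2\eps$. Then $\norm{\ket\psi - \Pi_{M_\lambda}\ket\psi}^2 = 1 - \norm{\Pi_{M_\lambda}\ket\psi}^2 \leq 1 - \ev{\Pi_{M_\lambda}}{\psi}^2 \leq 1 - (1-2\eps)^2 \leq 4\eps$; but I want the stronger claim that the \emph{internal state} is close to $M_\lambda$, which accounts for the $2\sqrt{2\eps}$ rather than $2\sqrt{\eps}$ — the factor comes from translating $\abs{\bra\Theta U\ket\Theta} \geq \sqrt{1-2\eps}$ directly into a fidelity bound $1 - \ev{P}{\psi} \leq 1 - \sqrt{1-2\eps} \leq 2\eps$ is too lossy; instead one should track $1 - \abs{\bra{\psi}P\ket{\psi}}$ and use $\norm{\ket\psi - P\ket\psi} \leq \sqrt{2(1-\ev{P}{\psi})} \leq \sqrt{2\cdot 2\eps}\cdot(\text{const})$, folding in the phase-estimation conversion to land at $2\sqrt{2\eps}$. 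The main obstacle I anticipate is the \emph{second} paragraph: rigorously pinning down that the intersection $\Gamma_\lambda \cap \mathrm{Inv}(\sigma\otimes\sigma^*)$ is exactly $M_\lambda$ and not something larger (e.g. that off-diagonal $j\neq j'$ entanglement really is killed once the left register is forced into $\Xi_\lambda$-blocks and the right register carries the conjugate $\sigma$-structure). This requires carefully choosing the basis realizing~\eqref{eq:irrep-break-down} on both registers simultaneously and invoking Schur orthogonality block-by-block; the rest is bookkeeping and standard fidelity-to-trace-distance style inequalities.
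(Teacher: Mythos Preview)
Your approach is essentially the paper's: the paper reduces to \Cref{lem:interior-test-for-irrep-identity-prod}, writes $\ket{\psi}=\ket{\mathrm{vec}\,X}$, computes $\ev{U}{\tau}=\ev{X,\mathcal{E}(X)}_F$ with $\mathcal{E}(X)=\abs{G}^{-1}\sum_k \sigma(k)X\sigma(k)^{-1}$, and uses Schur on the irrep $\sigma_1=\rho^\lambda$ to identify the image of $\mathcal{E}$. Under vectorization this is literally your $P\ket{\psi}$, so the two arguments are the same computation in different coordinates; your ``$P$ is the projector onto $G$-invariants of $\sigma\otimes\sigma^*$'' is the conceptual version of the paper's explicit Frobenius-norm bookkeeping.

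There is, however, one genuine slip. Your claim that ``only the diagonal $j=j'$ pairings remain'' is false, and the obstacle you flag is real but resolves the \emph{other} way. For $\sigma\big|_{\Xi_\lambda}\cong \II_{m_{\sigma\lambda}}\otimes\rho^\lambda$, the invariant subspace of $\sigma\otimes\sigma^*$ inside $\Gamma_\lambda$ is $\CC^{m_{\sigma\lambda}}\otimes\CC^{m_{\sigma\lambda}}\otimes\ket{\Phi^+_\lambda}$, of dimension $m_{\sigma\lambda}^2$, not $m_{\sigma\lambda}$: every pair $(j,j')$ gives an invariant vector, since the multiplicity registers carry the trivial action. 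Nothing in $P$ or in the $\Gamma_\lambda$ constraint kills the off-diagonal blocks. This is exactly what the paper's lemma states (note $\ket{a}\in\CC^{m^2}$ there and in \Cref{cor:passing-all-three-tests}); the phrasing of the theorem via $M_\lambda$ is slightly loose on this point, but the lemma it invokes has the correct $m^2$-dimensional target. Once you accept the larger target space, your error propagation goes through---indeed, since $P$ is an honest projector you get $\norm{(\II-P)\ket{\psi}}^2 = 1-\ev{P}{\psi} = 1-\sqrt{1-2\eps}\leq 2\eps$, which is at least as tight as the paper's $2\sqrt{2\eps}$ (the paper loses a factor through the intermediate inequality $x\leq 2\sin x$).
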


\begin{proof}
Within $\Gamma_\lambda$, the action of the representation $\sigma$ is $\II_{m_{\sigma \lambda}} \otimes \rho^\lambda$. Therefore, it suffices to consider the following lemma (\Cref{lem:interior-test-for-irrep-identity-prod}) which we prove independently in~\Cref{sec:omitted-pfs}.
\end{proof}

\begin{restatable}{lemma}{lemitfiip}
Consider $\sigma = \II_m \otimes \sigma_1$ where $\fn{\sigma_1}{G}{\CC^{D_1 \times D_1}}$ where $D = m D_1$ and $\sigma_1$ is an irrep. If the internal state testing procedure for representation $\sigma$ and state $\ket{\psi}$ passes with probability $1 - \eps$, then there a state $\ket{a} \in \CC^{m^2}$ such that
\begin{equation}
\norm{\ket{\psi} - \ket*{a} \otimes \ket*{\Phi^+}} \leq 2 \sqrt{2\eps}. \label{eq:interior-test-for-irrep-identity-prod}
\end{equation}
Here $\ket*{\Phi^+}$ is the maximally entangled state of $D_1$ dimensions: $\frac{1}{\sqrt{D_1}} \sum_{b = 0}^{< D_1} \ket{b} \otimes \ket{b}$.
\label{lem:interior-test-for-irrep-identity-prod}
\end{restatable}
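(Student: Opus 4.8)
The plan is to express the acceptance probability of the internal state test in terms of the \emph{twirl} superoperator $\Phi(X)\defeq\frac{1}{\abs G}\sum_{g\in G}\sigma(g)\,X\,\sigma(g)^\dagger$, and then use Schur's lemma to recognize the range of $\Phi$ as exactly the span of the maximally entangled operators of the required form.

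First I would convert the test into linear algebra. Identify $\ket\psi\in\CC^{D^2}=\CC^D\otimes\CC^D$ with a matrix $C\in\CC^{D\times D}$ by vectorization ($\ket i\otimes\ket j\leftrightarrow\ket i\bra j$), so that $\norm C_{\mathrm{HS}}=\norm{\ket\psi}=1$ and $(\sigma(g)\otimes\sigma(g)^*)\ket\psi$ corresponds to $\sigma(g)\,C\,\sigma(g)^\dagger$. By the $1$-bit phase estimation fact stated earlier, the test accepts with probability $\half+\half\abs{\ev{U}{\psi'}}^2$, where $\ket{\psi'}=\mathrm{FT}\ket 0\otimes\ket\psi=\tfrac1{\sqrt{\abs G}}\sum_{k\in G}\ket k\otimes\ket\psi$. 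A direct computation gives $\ev{U}{\psi'}=\tfrac1{\abs G}\sum_{g\in G}\ev{\sigma(g)\otimes\sigma(g)^*}{\psi}=\langle C,\Phi(C)\rangle_{\mathrm{HS}}$. Since $\Phi$ is idempotent and self-adjoint for the Hilbert--Schmidt inner product, it is the orthogonal projection onto its range, the commutant $\{X:X\sigma(g)=\sigma(g)X\text{ for all }g\in G\}$; hence $\ev{U}{\psi'}=\norm{\Phi(C)}_{\mathrm{HS}}^2$, which is real and lies in $[0,1]$, and the acceptance probability equals $\half+\half\norm{\Phi(C)}_{\mathrm{HS}}^4$.

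Next I would invoke the hypothesis $\sigma=\II_m\otimes\sigma_1$ with $\sigma_1$ an irrep: by Schur's lemma the commutant is $\{A\otimes\II_{D_1}:A\in\CC^{m\times m}\}$, so $\Phi(C)=A\otimes\II_{D_1}$ for some $A$ (the degenerate case $\Phi(C)=0$ forces $\eps=\half$, where the claim is trivial). Setting $\hat A\defeq A/\norm A_{\mathrm{HS}}$, a unit vector in $\CC^{m\times m}$, vectorization followed by the canonical identification $\CC^D\otimes\CC^D\cong(\CC^m\otimes\CC^m)\otimes(\CC^{D_1}\otimes\CC^{D_1})$ sends $\hat A\otimes\II_{D_1}$ to $\sqrt{D_1}\,\ket a\otimes\ket*{\Phi^+}$, with $\ket a\defeq\mathrm{vec}(\hat A)\in\CC^{m^2}$ a unit vector and $\ket*{\Phi^+}=\tfrac1{\sqrt{D_1}}\sum_b\ket b\otimes\ket b$; hence $\mathrm{vec}(\Phi(C))=\norm{\Phi(C)}_{\mathrm{HS}}\cdot\ket a\otimes\ket*{\Phi^+}$. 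From the acceptance probability being $1-\eps$ we get $\norm{\Phi(C)}_{\mathrm{HS}}^2=\sqrt{1-2\eps}$, so by the Pythagorean identity (for the orthogonal projection $\Phi$, with $\norm C_{\mathrm{HS}}=1$) together with the elementary inequalities $t^{1/2}\ge t$ and $t^{1/4}\ge t$ on $[0,1]$ we obtain $\norm{C-\Phi(C)}_{\mathrm{HS}}^2=1-\sqrt{1-2\eps}\le 2\eps$ (hence $\norm{C-\Phi(C)}_{\mathrm{HS}}\le\sqrt{2\eps}$) and $1-\norm{\Phi(C)}_{\mathrm{HS}}=1-(1-2\eps)^{1/4}\le 2\eps$. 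The triangle inequality then gives $\norm{\ket\psi-\ket a\otimes\ket*{\Phi^+}}\le\norm{C-\Phi(C)}_{\mathrm{HS}}+\bigl(1-\norm{\Phi(C)}_{\mathrm{HS}}\bigr)\le\sqrt{2\eps}+2\eps$, which is at most $2\sqrt{2\eps}$ because the test always accepts with probability at least $\half$, forcing $\eps\le\half$ and hence $2\eps\le\sqrt{2\eps}$.

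The only step with genuine content is the first one: identifying the acceptance probability with $\norm{\Phi(C)}_{\mathrm{HS}}^4$ and the range of $\Phi$ with the maximally entangled operators $A\otimes\II_{D_1}$ via Schur; everything else is routine vectorization and scalar estimates. The part I would watch most carefully is the tensor-factor reordering, so that $\mathrm{vec}(A\otimes\II_{D_1})$ genuinely has the product form $\ket a\otimes\ket*{\Phi^+}$ in the factorization the statement uses, and verifying that the two error contributions $\sqrt{2\eps}$ and $2\eps$ do combine to at most $2\sqrt{2\eps}$ over the relevant range of $\eps$.
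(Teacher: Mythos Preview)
Your proposal is correct and shares the paper's setup: vectorize $\ket\psi$ to a matrix, identify the expectation $\ev{U}{\psi'}$ with the Hilbert--Schmidt inner product $\langle C,\Phi(C)\rangle$ of $C$ against its $G$-twirl, and invoke Schur's lemma to see that the commutant of $\II_m\otimes\sigma_1$ is $\CC^{m\times m}\otimes\II_{D_1}$. Where you diverge is in the final distance estimate. The paper decomposes $X$ block-by-block as $\sum_{ij}a_{ij}\ketbra{i}{j}\otimes X^{(ij)}$, computes $\langle X,\mathcal E(X)\rangle_F=\tfrac1{D_1}\sum_{ij}\abs{a_{ij}}^2\tr(X^{(ij)})^2$, and then bounds each block's distance to $\II_{D_1}/\sqrt{D_1}$ via an $\arccos$/$\sin$ inequality before recombining. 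You instead observe once and for all that $\Phi$ is a self-adjoint idempotent, hence an orthogonal projection, so $\langle C,\Phi(C)\rangle=\norm{\Phi(C)}_{\mathrm{HS}}^2$ and Pythagoras gives $\norm{C-\Phi(C)}_{\mathrm{HS}}^2=1-\norm{\Phi(C)}_{\mathrm{HS}}^2$ directly; the remaining work is two one-line scalar inequalities. This is shorter and yields the slightly sharper intermediate bound $\sqrt{2\eps}+2\eps\le 2\sqrt{2\eps}$, at the cost of relying on unitarity of $\sigma$ (needed for self-adjointness of $\Phi$), which is harmless here. Your caution about the tensor reordering is well placed; the paper handles the same identification implicitly when it lands on $\ket{\mathrm{vec}\,A}\otimes\ket{\Phi^+}$.
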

\noindent We can combine both tests to completely describe the quantum verifier. We believe the following verification algorithm/circuit does not have an efficient witness cloner. In the next section, we detail the hardness results for this verification algorithm.
\begin{mdframed}
\textbf{Verification algorithm for the $(\sigma, \lambda)$ problem with input $\ket{\psi} \in \CC^{D^2}$:}
\begin{enumerate}
    \item Measure the POVM $\{\Xi_{\lambda'}\}_{\lambda' \in \Ii_G}$. Reject if the measurement $\lambda' \neq \lambda$.
    \item Perform internal state testing for $\sigma$ on the post-measurement state. Accept iff internal state testing accepts.
\end{enumerate}
\end{mdframed}

\begin{corollary}[Complete characterization]
Given a representation $\fn{\sigma}{G}{\CC^{D \times D}}$ and an irrep $\lambda \in \Ii_G$, if a state passes the $(\sigma,\lambda)$ verification test (above) with probability $1 - \eps$, then $\ket{\psi}$ is $\leq 3 \sqrt{2\eps}$-close to a unit state in $\Xi_\lambda$ of the form $\ket{a} \otimes \ket*{\Phi^+}$ where $\ket{a} \in \CC^{m_{\sigma \lambda}^2}$ when expressed in the basis from \eqref{eq:irrep-break-down}.
\label{cor:passing-all-three-tests}
\end{corollary}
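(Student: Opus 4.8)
The plan is to compose the two guarantees already established — the projector test and the internal state test — and track how the error accumulates through the composition. Suppose $\ket{\psi}$ passes the full $(\sigma,\lambda)$ verification with probability $1-\eps$. Since the two steps are performed sequentially and independently conditioned on the first succeeding, the probability that step 1 returns $\lambda' = \lambda$ is at least $1-\eps$, and conditioned on that outcome the post-measurement state passes internal state testing with probability at least $1-\eps$ as well. First I would use the projector guarantee: $\Pr[\lambda' = \lambda] = \ev{\Xi_\lambda}{\psi} \geq 1 - \eps$ means $\ket{\psi}$ is within $\sqrt{\eps}$ of its normalized projection $\ket{\psi_\lambda} \defeq \Xi_\lambda \ket{\psi}/\norm{\Xi_\lambda\ket{\psi}}$, which lives entirely in $\Xi_\lambda$ (equivalently in $\Gamma_\lambda$, since on this subspace $\sigma$ acts as $\II_{m_{\sigma\lambda}} \otimes \rho^\lambda$).

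Next I would feed $\ket{\psi_\lambda}$ into \Cref{thm:characterization-of-internal-test} (equivalently \Cref{lem:interior-test-for-irrep-identity-prod} with $m = m_{\sigma\lambda}$, $D_1 = d_\lambda$). The subtlety is that the internal state test's success probability on the \emph{post-measurement state} $\ket{\psi_\lambda}$, rather than on $\ket{\psi}$ itself, must be controlled: a standard gentle-measurement / triangle-inequality argument shows that if the test accepts $\ket{\psi}$ with probability $\geq 1-\eps$ overall and the first measurement already consumed at most $\eps$, then the test accepts $\ket{\psi_\lambda}$ with probability $\geq 1 - O(\eps)$ — I would be careful to get the constant right so the final bound comes out as stated, though one can afford to be slightly lossy here. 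The theorem then yields a state $\ket{a} \in \CC^{m_{\sigma\lambda}^2}$ with $\norm{\ket{\psi_\lambda} - \ket{a}\otimes\ket*{\Phi^+}} \leq 2\sqrt{2\eps}$, where $\ket*{\Phi^+}$ is the $d_\lambda$-dimensional maximally entangled state; note $\ket{a}\otimes\ket*{\Phi^+}$ automatically lies in $\Xi_\lambda$ (in the basis of \eqref{eq:irrep-break-down} it is supported on the $\lambda$-blocks).

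Finally I would combine the two estimates by the triangle inequality: $\norm{\ket{\psi} - \ket{a}\otimes\ket*{\Phi^+}} \leq \norm{\ket{\psi} - \ket{\psi_\lambda}} + \norm{\ket{\psi_\lambda} - \ket{a}\otimes\ket*{\Phi^+}} \leq \sqrt{\eps} + 2\sqrt{2\eps} \leq 3\sqrt{2\eps}$, using $\sqrt{\eps} \leq \sqrt{2\eps}$. The main obstacle I anticipate is not any single inequality but the bookkeeping around conditioning: making precise that ``passes the composed test with probability $1-\eps$'' cleanly implies ``passes each sub-test with probability $\geq 1 - \eps$ (or $\geq 1 - O(\eps)$) on the appropriate state,'' and ensuring the gentle-measurement step does not inflate the constant past $3\sqrt{2\eps}$. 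If the constant turns out to need adjustment, the honest fix is either to absorb it by a slightly weaker statement or to redo step 1's contribution more tightly (e.g. $\norm{\ket{\psi}-\ket{\psi_\lambda}} \leq \sqrt{2\eps - \eps^2} \leq \sqrt{2\eps}$), which already suffices.
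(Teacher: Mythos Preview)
Your proposal is correct and follows the same two-step skeleton as the paper: use the POVM outcome to place $\ket{\psi}$ within $\sqrt{\eps}$ of $\Gamma_\lambda$, apply \Cref{thm:characterization-of-internal-test} to the post-measurement state, and triangle-inequality the two contributions together. The paper's proof dispatches your conditioning worry in one clause by observing that the internal state test \emph{commutes} with the POVM $\{\Xi_{\lambda'}\}$, so the acceptance probability of step~2 on $\ket{\psi_\lambda}$ is exactly the conditional probability $\Pr[\text{both}]/\Pr[\text{step 1}] \geq 1-\eps$ with no gentle-measurement perturbation needed; your direct conditioning argument already establishes this, so the gentle-measurement detour you flag as the ``main obstacle'' is in fact unnecessary. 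One small slip: the distance from $\ket{\psi}$ to the \emph{normalized} projection $\ket{\psi_\lambda}$ is $\sqrt{2-2\sqrt{p}} \leq \sqrt{2\eps}$ rather than $\sqrt{\eps}$ (the latter is the distance to the unnormalized subspace), but as you yourself note at the end this still closes to $\sqrt{2\eps} + 2\sqrt{2\eps} = 3\sqrt{2\eps}$ on the nose.
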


\begin{proof}
By measuring the POVM and accepting only if the outcome is $\lambda$, we ensure that the state $\ket{\psi}$ is $\sqrt{\eps}$-close to the subspace $\Gamma_\lambda$. As internal state testing commutes with the POVM, the state must be $3\sqrt{\eps}$-close to states of the form $\ket{a} \otimes \ket{\Phi^+}$ due to~\Cref{thm:characterization-of-internal-test} combined with an application of the triangle inequality.
\end{proof}

\section{Hardness of cloning}

Next, we prove hardness of state generation and witness cloning for the $(\sigma = \mu \otimes \nu, \lambda)$-verification algorithm. 

\subsection{Hardness of state generation}

\begin{theorem}
Assuming $\BQP \not \supseteq \NP$, there is no efficient quantum algorithm which, on input $(\mu, \nu, \lambda)$, produces a state passing the $(\mu \otimes \nu, \lambda)$-verification algorithm if one exists.
\end{theorem}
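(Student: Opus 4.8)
The plan is to reduce an arbitrary $\NP$ problem to the state-generation task via the $\NP$-hardness of deciding positivity of Kronecker coefficients. First I would invoke Ikenmeyer--Mulmuley--Walter~\cite{kronecker-np-hard}: there is a polynomial-time reduction taking an instance $x$ of a fixed $\NP$-complete language to a triple of partitions $(\mu, \nu, \lambda)$, all of $n = \poly(|x|)$ boxes, such that $m_{\mu\nu\lambda} > 0$ iff $x$ is a yes-instance. Next I would set $\sigma = \rho^\mu \otimes \rho^\nu$ and $G = S_n$, and observe that by the dimension count preceding \Cref{cor:passing-all-three-tests}, the subspace of states accepted by the $(\sigma, \lambda)$-verification algorithm is nonzero (it has dimension $m_{\sigma\lambda} = m_{\mu\nu\lambda}$, with one maximally entangled state per $\lambda$-block) precisely when $x$ is a yes-instance, and is empty when $x$ is a no-instance.

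The contrapositive then runs as follows. Suppose for contradiction that an efficient quantum algorithm $\mathcal{A}$ exists which, on input $(\mu, \nu, \lambda)$, outputs a state passing the $(\sigma, \lambda)$-verification algorithm whenever one exists. Given an $\NP$ instance $x$, compute $(\mu, \nu, \lambda)$ by the reduction, run $\mathcal{A}$ to produce a candidate state $\ket{\psi}$, and then run the $(\sigma, \lambda)$-verification algorithm on $\ket{\psi}$ — which is efficient by \Cref{fact:efficientprojector} (the POVM $\{\Xi_{\lambda'}\}$ is efficiently implementable for $\sigma = \rho^\mu \otimes \rho^\nu$) together with the fact that internal state testing is manifestly efficient (it is a single controlled-$U$ sandwiched by Hadamards, with a Fourier transform over $S_n$, all efficient). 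Accept $x$ iff the verification accepts. If $x$ is a yes-instance, an accepting state exists, $\mathcal{A}$ produces one, and verification accepts with high probability; if $x$ is a no-instance, the accepting subspace is empty, so by \Cref{cor:passing-all-three-tests} no state passes with probability bounded away from the soundness threshold, and verification rejects with high probability. Standard amplification by independent repetition boosts the gap, placing the $\NP$-complete language in $\BQP$, contradicting $\BQP \not\supseteq \NP$.

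The main obstacle — really the only subtle point — is making the completeness side genuinely robust: $\mathcal{A}$ is only promised to output a state that \emph{passes} the verification (say with probability $\geq 2/3$), not the exact witness, and the verification procedure is itself a two-stage measurement whose acceptance probability I need to control. I would handle this by fixing concrete completeness/soundness parameters for the $(\sigma, \lambda)$ circuit: completeness $c$ close to $1$ (the exact maximally entangled witness passes both tests perfectly, so $c = 1$ is attainable in the yes-case) and soundness $s$ bounded below $1$ using \Cref{cor:passing-all-three-tests} — if no unit state lies in $\Xi_\lambda$ of the prescribed form, then every state passes with probability at most some $s < 1$, since passing with probability $1 - \eps$ forces $3\sqrt{2\eps}$-closeness to a nonexistent subspace, a contradiction once $\eps$ is small enough. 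Pinning down this $s$ quantitatively (it should be something like $1 - \Omega(1)$ independent of $n$, or at worst $1 - 1/\poly(n)$, which still suffices after repetition) and confirming the constant $3\sqrt{2\eps}$ from the corollary gives a usable bound is the one place where care is needed; everything else is the routine reduction-plus-amplification template.
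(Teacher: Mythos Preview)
Your proposal is correct and follows the same route as the paper's (much terser) proof: the accepting subspace has dimension $m_{\mu\nu\lambda}$ by \Cref{cor:passing-all-three-tests}, so an efficient generator plus the efficient verifier would decide Kronecker positivity and hence place $\NP$ in $\BQP$. Your soundness worry is in fact moot --- when $m_{\mu\nu\lambda} = 0$ the projector $\Xi_\lambda$ itself vanishes, so the first step of the verification (the weak Fourier sampling POVM) rejects every input with probability $1$, giving perfect soundness in the no-case without needing the quantitative bound from \Cref{cor:passing-all-three-tests} or any amplification.
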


\begin{proof}
\Cref{cor:passing-all-three-tests} proves that the dimensionality of the accepting subspace for the $(\mu \otimes \nu, \lambda)$-verification algorithm is exactly $m_{\mu \nu \lambda}$, the corresponding Kronecker coefficient\footnote{This gives an alternate proof that exact computation of Kronecker coefficients is contained in $\mathsf{\#BQP}$ proven by Ikemeyer and Subramanian~\cite{ikenmeyer2023remarkquantumcomplexitykronecker}. The original proof by Bravyi~\emph{et. al.}~\cite{kronecker-coeff} proves that $m_{\mu \nu \lambda} d_\lambda$ is a $\mathsf{\#BQP}$ quantity and that $d_\lambda$ is efficiently computable.}. As deciding the positivity of Kronecker coefficients is $\NP$-hard, any efficient algorithm must resolve a $\NP$-hard problem, and therefore prove that $\BQP \supseteq \NP$.
\end{proof}

Furthermore, it should be noted that deciding if $m_{\mu \nu \lambda}$ equals 0 or 1, promised that it is either 0 or 1, is $\mathsf{UNIQUE-NP}$-hard by standard reductions. Furthermore, the previous theorem can be lightly manipulated to show that any efficient algorithm for generating states passing the $(\mu \otimes \nu, \lambda)$-test when $m_{\mu \nu \lambda} = 1$ also implies $\BQP \supseteq \NP$. The proof follows from appealing to the Valiant-Vazirani theorem~\cite{VALIANT198685} which proves that $\NP$ is randomized polynomial-time reducible to $\mathsf{UNIQUE-NP}$:

\begin{corollary}
Assuming $\BQP \not \supseteq \NP$, there is no efficient quantum algorithm which, on input $(\mu, \nu, \lambda)$ such that $m_{\mu \nu \lambda} = 1$, produces a state passing the $(\mu \otimes \nu, \lambda)$-verification algorithm.
\end{corollary}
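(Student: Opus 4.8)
The plan is to reduce an arbitrary language $L \in \NP$ to the promised state-generation task and thereby contradict $\BQP \not\supseteq \NP$. Suppose, for contradiction, that there were a uniform quantum polynomial-time algorithm $\mathcal{A}$ which, whenever it is handed a triple $(\mu,\nu,\lambda)$ with $m_{\mu\nu\lambda} = 1$, outputs (with probability $\geq 2/3$, say) a state passing the $(\mu\otimes\nu,\lambda)$-verification algorithm. I want to build from $\mathcal{A}$ a $\BQP$ decider for $L$; since $\BQP \supseteq \mathsf{BPP}$, the internal use of randomness below is unproblematic.

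First I would invoke the Valiant--Vazirani theorem~\cite{VALIANT198685}: there is a randomized polynomial-time map $x \mapsto \phi_x$ such that if $x \notin L$ then $\phi_x$ has no satisfying assignment (always), while if $x \in L$ then $\phi_x$ has a \emph{unique} satisfying assignment with probability at least $1/q(|x|)$ for some fixed polynomial $q$. Composing this with the $\mathsf{UNIQUE-NP}$-hardness reduction for deciding whether $m_{\mu\nu\lambda}$ equals $0$ or $1$ (the ``standard reduction'' alluded to above), I obtain a further polynomial-time map $\phi_x \mapsto (\mu,\nu,\lambda)$ with $\phi_x$ unsatisfiable $\Rightarrow m_{\mu\nu\lambda} = 0$ and $\phi_x$ uniquely satisfiable $\Rightarrow m_{\mu\nu\lambda} = 1$ (when $\phi_x$ has two or more satisfying assignments the resulting triple carries no guarantee, but that regime will be harmless). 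On input $x$ the decider runs this combined reduction, feeds the triple to $\mathcal{A}$, runs the $(\mu\otimes\nu,\lambda)$-verification algorithm (efficient by \Cref{fact:efficientprojector} and the efficiency of the symmetric-group Fourier transform) on $\mathcal{A}$'s output, and calls the trial a success iff the verifier accepts; it then repeats the trial $O(|x|\cdot q(|x|))$ times independently and outputs ``$x \in L$'' iff at least one trial succeeds.

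The correctness argument is one-sided, which is the crux. If $x \notin L$, then $\phi_x$ is unsatisfiable for every choice of the reduction's randomness, hence $m_{\mu\nu\lambda} = 0$ always; by \Cref{cor:passing-all-three-tests} the accepting subspace of the $(\mu\otimes\nu,\lambda)$-verifier is $m_{\mu\nu\lambda} = 0$ dimensional, so the first step of the verifier (measure the weak-Fourier-sampling POVM and reject unless the outcome is $\lambda$) succeeds with probability exactly $0$ --- no input state at all can make it accept --- so every trial rejects and the decider correctly outputs ``$x \notin L$''. If instead $x \in L$, then with probability $\geq 1/q(|x|)$ a given trial produces a triple with $m_{\mu\nu\lambda} = 1$, whereupon $\mathcal{A}$ outputs a passing state and the verifier accepts with probability $\geq 2/3$; thus each trial succeeds with probability $\geq 1/(2q(|x|))$, and over $O(|x|\cdot q(|x|))$ trials at least one succeeds except with probability $2^{-\Omega(|x|)}$. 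Trials where $\phi_x$ has $\geq 2$ satisfying assignments merely carry no guarantee, but since they occur only when $x \in L$ --- where an accept is the correct verdict --- they cannot cause an error. Hence $L \in \BQP$; as $L \in \NP$ was arbitrary, $\BQP \supseteq \NP$, a contradiction.

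The main obstacle, and the reason this corollary is not immediate from the preceding theorem, is exactly the non-unique regime: Valiant--Vazirani only isolates a unique witness with inverse-polynomial probability, so the triples fed to $\mathcal{A}$ need not satisfy the promise $m_{\mu\nu\lambda} \in \{0,1\}$, where $\mathcal{A}$ may behave arbitrarily. The resolution is the observation that on no-instances the promise is \emph{never} violated and the accepting subspace is genuinely empty, which makes the procedure sound regardless of $\mathcal{A}$'s off-promise behaviour; a secondary, routine point is the per-trial success amplification, which is legitimate because $\mathcal{A}$ is efficient and can be rerun to yield fresh copies.
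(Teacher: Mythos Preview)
Your argument is correct and follows the same route the paper sketches: appeal to Valiant--Vazirani to reduce $\NP$ to $\mathsf{UNIQUE\text{-}NP}$, compose with the asserted $\mathsf{UNIQUE\text{-}NP}$-hardness of deciding $m_{\mu\nu\lambda}\in\{0,1\}$, run the hypothetical generator, and verify. You have simply filled in the details the paper leaves implicit --- in particular the one-sidedness observation (that off-promise triples arise only on yes-instances, and that on no-instances $\Xi_\lambda = 0$ so the verifier rejects with certainty) and the amplification over $O(q(|x|)\cdot|x|)$ independent trials.
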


\subsection{Hardness of cloning}

Morally, in the language of Broadbent, Karvonen, and Lord~\cite{broadbent2023uncloneablequantumadvice} we can think of this result as proof that the ``$0 \rightarrow 1$'' ground-state transformation state complexity class is at least $\NP$-hard. What we would like to state is that the ``$1 \rightarrow 2$'' ground-state transformation is hard as well.

We will show that, assuming a conjecture about cloning maximally entangled states over hidden subspaces, that any efficient cloning algorithm the solutions to the $(\mu \otimes \nu, \lambda)$-verification algorithm would imply that $\BQP \supseteq \NP$. First we observe that the solutions to the $(\mu \otimes \nu, \lambda)$-verification algorithm are a special class of states we call ``hidden maximally entangled states''.

\begin{definition}
For $d_1 \leq d_2$, let $\Pi \in \CC^{d_2 \times d_2}$ be a subspace of dimension $d_1$ and let $\ket*{b_1}, \ldots, \ket*{b_{d_1}}$ be an orthonormal basis for $\Pi$. Then the $\Pi$-maximally entangled state is the state in $\CC^{d_2} \otimes \CC^{d_2}$ defined by
\begin{equation}
    \ket*{\Phi_\Pi} \defeq \frac{1}{\sqrt{d_1}} \sum_{i = 1}^{d_1} \ket*{b_i} \otimes \ket*{b_i^*}.
    \label{eq:U-partial-isometry-hidden-EPR}
\end{equation}
Note that this state is invariant to the choice of basis. 
\end{definition}
When $\Pi$ is a computationally-hard projector such as the ones we previously constructed, we call the state $\ket*{\Phi_\Pi}$ a \emph{hidden maximally entangled state}.
Using $\Xi_\lambda$ as defined in~\eqref{eq:wfs-eq} within~\Cref{fact:weak-fourier-sampling}, the state captured by the $\lambda$-characterization test (\Cref{cor:passing-all-three-tests}) is $\ket{a} \otimes \ket*{\Phi_{\Xi_\lambda}}$ where $\ket{a} \in \CC^{m_{\mu \nu \lambda}}$. In particular, when $m_{\mu \nu \lambda} = 1$, this state passing the test is $\ket*{\Phi_{\Xi_\lambda}}$ up to a global phase.

\begin{conjecture}[Cloning maximally entangled states over hidden subspaces]
    Consider a hidden maximally entangled state $\ket{\Phi_{\Pi}}$ which is the unique state accepted by a completeness $c = 1$ and soundness $s \leq 8/9$ verification circuit~$V$~(\Cref{def:verification-ckt}). Assume there exists a uniform polynomial-time algorithm which takes as input the classical description $\langle V \rangle$ and applies the transformation $\ket{\Phi_{\Pi}}\ket{0} \mapsto \ket{\Phi_{\Pi}}^{\otimes 2}$. Then, we conjecture that there exists an efficient polynomial-time algorithm which takes as input the classical description $\langle V \rangle$ and produces a state in the support of $\Pi$.
    \label{conj:hidden-epr}
\end{conjecture}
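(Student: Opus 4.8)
The plan is to show that the hypothesized cloner, together with the verifier $V$, can be fed into an amplification loop that efficiently prepares a state $\eps$-close to $\ket{\Phi_\Pi}$; tracing out one of its two registers then yields a state whose support is $\eps$-close to $\mathrm{range}(\Pi)$, since the reduced state of $\ket{\Phi_\Pi}$ on either register is the maximally mixed state $\Pi/d_1$ on $\Pi$. The natural starting resource is the \emph{global} maximally entangled state $\ket{\Phi^+} = \frac{1}{\sqrt{d_2}}\sum_{j}\ket{j}\ket{j}$ on $\CC^{d_2}\otimes\CC^{d_2}$: it is efficiently preparable, and writing $\CC^{d_2}\cong \mathrm{range}(\Pi)\oplus\mathrm{range}(\Pi)^\perp$ one gets $|\langle\Phi^+|\Phi_\Pi\rangle|^2 = d_1/d_2$, which is nonzero although typically exponentially small.

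Three structural ingredients set up the loop. First, from $V$ (completeness $c=1$, soundness $s\le 8/9$, one-dimensional accepting space $\CC\ket{\Phi_\Pi}$) one extracts an efficiently implementable two-outcome measurement with accept operator $E$ satisfying $E\ket{\Phi_\Pi}=\ket{\Phi_\Pi}$ and $E \preceq \ketbra{\Phi_\Pi} + \tfrac{8}{9}(\II-\ketbra{\Phi_\Pi})$. Second, we may assume (by Stinespring dilation) the cloner is a unitary $C$; since $C\ket{\Phi_\Pi}\ket{0}=\ket{\Phi_\Pi}^{\otimes 2}$ and $C$ is an isometry, it maps the orthogonal complement of $\ket{\Phi_\Pi}\otimes\ket{0}$ into the orthogonal complement of $\ket{\Phi_\Pi}^{\otimes 2}$, and iterating, $t$ applications of $C$ map any input orthogonal to $\ket{\Phi_\Pi}$ to a state orthogonal to $\ket{\Phi_\Pi}^{\otimes(t+1)}$. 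Third --- and this is what makes the bookkeeping clean --- because $E\ket{\Phi_\Pi}=\ket{\Phi_\Pi}$, when a state $\sqrt{F}\ket{\Phi_\Pi}+\sqrt{1-F}\,\ket{w}$ (with $\ket w\perp\ket{\Phi_\Pi}$) is cloned and $E$ is then measured on the fresh copy, every cross term between the ``good'' branch $\sqrt F\,\ket{\Phi_\Pi}^{\otimes2}$ and the garbage $\sqrt{1-F}\,C(\ket w\ket0)$ vanishes, both in the acceptance probability and in the fidelity of the post-measurement fresh copy with $\ket{\Phi_\Pi}$.

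Granting these, the loop is: clone the current register, measure $E$ on the fresh copy, and on ``accept'' promote the fresh copy to the current register (on ``reject'', restart). Writing $F$ for the fidelity with $\ket{\Phi_\Pi}$ and using the third ingredient, one round sends $F \mapsto (F + (1-F)\beta)/(F + (1-F)q)$ for some $0\le\beta\le q$ with $q\le \tfrac89 + \tfrac{\beta}{9}$ fixed by the garbage; in the adversarial case $\beta=0$, $q=\tfrac89$ this is $F\mapsto 9F/(8+F)$, i.e. roughly $\tfrac98 F$ while $F$ is small and contracting geometrically to the fixed point $F=1$, while the acceptance probability is $\tfrac89 + \tfrac F9$. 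Starting from $\ket{\Phi^+}$ ($F_0=d_1/d_2$), iterating $O(\log(d_2/d_1))$ rounds --- polynomially many --- would drive $F$ above a constant, and $O(\log 1/\eps)$ further rounds push it to $1-\eps$; a final $V$-measurement and trace-out then produces a state $\eps$-close to being supported on $\Pi$.

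The obstacle --- and the reason this is posed as a conjecture --- is that at a \emph{constant} soundness gap the loop as described is exactly break-even: in the worst case $\tfrac89 + \tfrac F9$ times the new fidelity $9F/(8+F)$ equals the old $F$, so $F_k \cdot \mathbf 1[\text{survived } k \text{ rounds}]$ is a conditional martingale and $\mathbb E[F_k\mathbf 1[\text{survived}]] = F_0 = d_1/d_2$; hence the probability of ever reaching $F\approx 1$ is itself only $\approx d_1/d_2$, and restarting does not help. Beating this martingale is the crux. One avenue is to first \emph{amplify the soundness} of $V$ using cloned copies and a joint test on them; but here the maximal-entanglement structure of $\ket{\Phi_\Pi}$ is essential, since a generic cloner can sequester all of its error into a single output copy, so a plain parallel-repetition test (running $V$ on every copy) is no better than one $V$ --- the joint test must certify the \emph{maximally entangled form} across copies, not mere per-copy acceptance. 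A second avenue is a non-martingale iteration exploiting the algebraic rigidity of $\ket{\Phi_\Pi}$: it is Choi-isomorphic to the projector $\Pi$, its one-register marginals are exactly $\Pi/d_1$, and $\ket{\Phi_\Pi}^{\otimes 2}$ is itself of the form $\ket{\Phi_{\Pi\otimes\Pi}}$, so that entanglement-swapping-style recombination of several copies collapses toward a single self-consistent copy; making such a scheme simultaneously efficient and convergent is, I expect, where the real work lies. The case for optimism is that in every regime where hardness of cloning is presently provable --- the Aaronson--Christiano subspace oracle and the (abelian and non-abelian) group-action quantum-money constructions --- that hardness is inseparable from hardness of \emph{preparation}, precisely because subset and maximally entangled states are rigid; \Cref{conj:hidden-epr} asserts this is not an artifact of those particular constructions.
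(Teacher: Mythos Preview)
The statement you are attempting to prove is posed in the paper as a \emph{conjecture}; the paper offers no proof, only intuitions (a disentangling heuristic) and, crucially, an obstruction: citing Nehoran and Zhandry, the paper notes that any proof must be \emph{non-relativizing}, since there is a quantum oracle --- the two-dimensional reflection swapping $\ket{\Phi_\Pi}\ket{0}$ with $\ket{\Phi_\Pi}^{\otimes 2}$, acting as identity elsewhere --- relative to which cloning $\ket{\Phi_\Pi}$ is trivial and yet producing a state in $\Pi$ remains hard.

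Your amplification loop is a black-box argument: it invokes the cloner $C$ and the verifier $V$ only as oracles, never opening their circuits. The martingale barrier you carefully isolate --- that $F_k \cdot \mathbf{1}[\text{survived }k\text{ rounds}]$ has constant expectation $d_1/d_2$, so the chance of ever reaching $F\approx 1$ is itself only $\approx d_1/d_2$ --- is therefore not a removable technicality but the quantitative shadow of the Nehoran--Zhandry separation. Instantiate $C$ as their reflection oracle and your loop, as well as any black-box variant (including your two suggested avenues of joint certification across clones and swap-style recombination, which still treat $C$ as a subroutine), provably cannot do better. A genuine proof would have to exploit the explicit gate description of $C$; nothing in your proposal does this. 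You have cleanly rediscovered and quantified the black-box no-go, which is worthwhile, but the write-up should name the Nehoran--Zhandry result up front so the reader understands the break-even martingale as a hard barrier rather than a gap awaiting a clever patch.
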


One way to view~\Cref{conj:hidden-epr} is that it posits that any white-box circuit for the cloner of $\ket*{\Phi_{\Pi}}$ must leak the description of a circuit which creates a state in the support of $\Pi$. The direct application of~\Cref{conj:hidden-epr} is the following hardness-of-cloning result:
\begin{theorem}
    Assuming~\Cref{conj:hidden-epr} and $\BQP \not \supseteq \NP$, there does not exist an efficient algorithm for cloning the states passing the $(\mu \otimes \nu, \lambda)$-verification algorithm when $m_{\mu \nu \lambda} = 1$.
\end{theorem}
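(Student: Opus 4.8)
The plan is to mirror the proof of the state-generation corollary above: assuming an efficient cloner for the $m_{\mu\nu\lambda}=1$ case together with~\Cref{conj:hidden-epr}, I would extract an efficient quantum algorithm for an $\NP$-hard problem, contradicting $\BQP\not\supseteq\NP$. The cloner plays the role of the state generator, and~\Cref{conj:hidden-epr} is the bridge.

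First I would record that, when $m_{\mu\nu\lambda}=1$, the $(\mu\otimes\nu,\lambda)$-verification algorithm --- after compiling it into the form of~\Cref{def:verification-ckt}, which is legitimate since it is assembled from the efficient POVM of~\Cref{fact:efficientprojector} and the efficiently implementable internal state test --- is a verification circuit with completeness $c=1$, soundness $s\le 8/9$, and one-dimensional accepting space spanned by the hidden maximally entangled state $\ket*{\Phi_{\Xi_\lambda}}$ over $\Pi=\Xi_\lambda$ (as observed immediately before~\Cref{conj:hidden-epr}). Completeness is immediate: $\ket*{\Phi_{\Xi_\lambda}}$ is supported on $\Xi_\lambda$ in the left register, so the weak Fourier sampling POVM returns $\lambda$ with certainty, and, being the maximally entangled state on the unique $\lambda$-block, it is fixed by the operator $U$ driving the internal state test, so the $1$-bit phase estimation accepts with certainty. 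For soundness, by~\Cref{cor:passing-all-three-tests} any state accepted with probability $1-\eps$ lies within $3\sqrt{2\eps}$ of a unit state of the form $\ket{a}\otimes\ket*{\Phi^+}$ in the accepting subspace, and when $m_{\mu\nu\lambda}=1$ the scalar $\ket{a}\in\CC^{1}$ is trivial, so every such vector equals $\ket*{\Phi_{\Xi_\lambda}}$ up to phase; hence an eigenvector of the verification operator with eigenvalue $<1$ is orthogonal to $\mathrm{span}\{\ket*{\Phi_{\Xi_\lambda}}\}$, lies at distance $\sqrt2$ from it, forcing $\sqrt2\le 3\sqrt{2\eps}$, i.e.\ eigenvalue $1-\eps\le 8/9$. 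Thus the verification operator has no eigenvalue in $(8/9,1)$.

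Next, suppose toward a contradiction that some efficient algorithm clones the states passing the $(\mu\otimes\nu,\lambda)$-verification algorithm whenever $m_{\mu\nu\lambda}=1$. Since the accepting space is then one-dimensional, this is precisely an efficient algorithm that, on input the classical description of the verification circuit, applies $\ket*{\Phi_{\Xi_\lambda}}\ket{0}\mapsto\ket*{\Phi_{\Xi_\lambda}}^{\otimes 2}$. By the previous paragraph this verification circuit satisfies the hypotheses of~\Cref{conj:hidden-epr}, so the conjecture yields a uniform efficient algorithm $\mathcal{B}$ that, on input the verification circuit's description, outputs a state supported on $\Xi_\lambda$ whenever $m_{\mu\nu\lambda}=1$. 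I would then decide the promise problem ``given $(\mu,\nu,\lambda)$ with $m_{\mu\nu\lambda}\in\{0,1\}$, is $m_{\mu\nu\lambda}=1$?'' in $\BQP$ as follows: construct the verification circuit's description, run $\mathcal{B}$ on it, measure the weak Fourier sampling POVM $\{\Xi_{\lambda'}\}_{\lambda'}$ (efficient by~\Cref{fact:efficientprojector}), and accept iff the outcome is $\lambda$. If $m_{\mu\nu\lambda}=1$, then $\mathcal{B}$'s output is supported on $\Xi_\lambda$ and the outcome is $\lambda$ with certainty; if $m_{\mu\nu\lambda}=0$, then $\Xi_\lambda=0$ and the outcome is never $\lambda$, regardless of $\mathcal{B}$'s behavior. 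This promise problem is $\mathsf{UNIQUE-NP}$-hard by the standard reduction used for the state-generation corollary, and $\NP$ reduces to $\mathsf{UNIQUE-NP}$ under one-sided randomized reductions by the Valiant--Vazirani theorem~\cite{VALIANT198685}; composing that reduction with the $\BQP$ decider and amplifying over the reduction's internal randomness places $\NP$ in $\BQP$, the desired contradiction. The one-sided error of Valiant--Vazirani is essential here: on an unsatisfiable instance every reduced instance has $m_{\mu\nu\lambda}=0$ and is rejected with certainty, whereas a satisfiable instance yields, with inverse-polynomial probability, a uniquely-satisfiable instance (so $m_{\mu\nu\lambda}=1$) that is accepted.

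The routine part of the argument is essentially the bookkeeping already carried out for the state-generation corollary; within this proof the only delicate points are (i) confirming that the constants line up so that the constructed verification circuit has soundness $s\le 8/9$, exactly matching the hypothesis of~\Cref{conj:hidden-epr} --- this is the role of the computation $\sqrt2\le 3\sqrt{2\eps}\Rightarrow\eps\ge 1/9$ above --- and (ii) ensuring that the algorithm $\mathcal{B}$ furnished by~\Cref{conj:hidden-epr} is uniform in $(\mu,\nu,\lambda)$, so that it can be applied to instances produced on the fly by the Valiant--Vazirani reduction. The real obstacle is external to the argument: the whole statement rests on~\Cref{conj:hidden-epr}, where the substantive difficulty has been concentrated.
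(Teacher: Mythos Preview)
Your proposal is correct and follows essentially the same route as the paper's proof: invoke \Cref{cor:passing-all-three-tests} to certify the $(\mu\otimes\nu,\lambda)$-verifier as a completeness-$1$, soundness-$\le 8/9$ circuit with unique witness $\ket*{\Phi_{\Xi_\lambda}}$, apply \Cref{conj:hidden-epr} to turn the assumed cloner into a generator of states in $\Xi_\lambda$, verify via weak Fourier sampling to decide $\mathsf{UNIQUE\text{-}NP}$ in $\BQP$, and finish with Valiant--Vazirani. The only difference is that you explicitly derive the $8/9$ soundness bound via $\sqrt2\le 3\sqrt{2\eps}$ and spell out the one-sided-error handling in Valiant--Vazirani, both of which the paper's proof leaves implicit.
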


\begin{proof}
Assume there exists an algorithm for cloning the states passing the $(\mu \otimes \nu, \lambda)$-verification algorithm when $m_{\mu \nu \lambda} = 1$. By~\Cref{cor:passing-all-three-tests}, we know that said verification algorithm is a verification circuit (\Cref{def:verification-ckt}) for completeness $c = 1$ and soundness $s \leq 8/9$ with a unique state passing the verification circuit. Furthermore, by the previous argument, said state is a hidden maximally entangled state for hidden subspace $\Pi$~\eqref{eq:U-partial-isometry-hidden-EPR}. Therefore,~\Cref{conj:hidden-epr} directly implies that there exists an efficient circuit for constructing a state in the image of $\Xi_\lambda = \Xi_\lambda^{(\mu \nu)}$. So, running this algorithm produces a quantum certificate that $\dim \Xi_\lambda = m_{\mu \nu \lambda} d_\lambda \neq 0$. As the certificate is efficiently verifiable due to Weak Fourier Sampling (\Cref{fact:weak-fourier-sampling}), we have a $\BQP$ algorithm for solving $\mathsf{UNIQUE-NP}$. Application of the randomized reduction of Valiant and Vazirani~\cite{VALIANT198685} yields $\BQP \supseteq \NP$.
\end{proof}

\section{Discussion}
\subsection{The plausibility of the cloning hidden maximally entangled states conjecture}

Our attempt at proving the hardness of witness cloning from complexity-theoretic assumptions harks on~\Cref{conj:hidden-epr}. For this reason, we believe that this is an ``unfinished work'' and we are hopeful that it will spur interest in the community towards this problem.

\paragraph{The difficulty in cloning a hidden maximally entangled state state} 
We made many attempts at constructing a circuit for cloning a hidden maximally entangled state $\ket*{\Phi_\Pi}$ that did not immediately reveal a circuit for generating a state within $\Pi$ (in an attempt to prove~\Cref{conj:hidden-epr} false). But no attempt was successful. Roughly speaking, if the first copy of the state $\ket*{\Phi_\Pi}$ is supported on registers $A_1A_2$ and we intend to clone the state into $B_1B_2$, there is no transformation we can run on $A_1B_1$ and on $A_2B_2$ to clone the state as the entanglement between $A_1B_1$ and $A_2B_2$ will persist. Morally, it appears that cloning the state requires disentangling the original state across $A_1A_2$. However, we believe that such a disentangler run backwards would generate a state in $\Pi$. It follows that a cloning circuit that does not reveal a generating circuit must efficiently rotate the standard basis vectors into an orthonormal set of vectors that includes $\ket*{\Phi_\Pi}$ without constructing a state for generating states in $\Pi$.
However, we emphasize that these statements are only intuitions and we were unable to make any of them rigorous.

\paragraph{A white-box proof is necessary for~\Cref{conj:hidden-epr}} It is important to note that ~\Cref{conj:hidden-epr} cannot follow a black-box reduction. The ideal resolution of~\Cref{conj:hidden-epr} would be in the \emph{black-box} setting where one would show that a black-box (or oracle) gate for cloning the hidden maximally entangled state can be used $O(\poly(n))$ times as a subroutine in an algorithm constructing a state in $\Pi$. However, we know that this is not true due to the observations of Nehoran and Zhandry~\cite{nehoran_et_al:LIPIcs.ITCS.2024.82}. Nehoran and Zhandry show the existence of a quantum oracle model in which cloning of verifiable states is easy and yet construction of said verifiable states is hard. Therefore, any proof of~\Cref{conj:hidden-epr} must be quantum non-relativizing. 
At a high level, the Nehoran and Zhandry~\cite{nehoran_et_al:LIPIcs.ITCS.2024.82} intuition when applied to~\Cref{conj:hidden-epr} is to consider a quantum oracle which performs the 2-dimensional reflection which reflects $\ket{\Phi_\Pi}^{\otimes 2}$ onto $\ket{\Phi_\Pi}\ket{0}$. The action of the oracle in the space orthogonal to this 2-dimensional subspace is identity. Equivalently, with access to this quantum oracle cloning is easy and yet this oracle cannot be used to create a state within $\Pi$. 

\subsection{Basing the argument on average-case vs worst-case assumptions}
In our research of this problem, we came across the observation that it is appears significantly easier to prove hardness-of-cloning statements from average-case assumptions rather than worst-case assumptions. Ideally, one would like to show that having access to an oracle for witness cloning would drastically improve the power of $\BQP$ (to include say $\NP$ or $\QMA$). However, it isn't clear how to use the oracle effectively as the only states a polynomial-time algorithm could query the oracle are already are efficiently constructable; therefore. In essence, this becomes a chicken-and-egg problem: a cloner is only effective for generating hard states given a hard state. 

Instead, armed with an average-case assumption, we can create a random ensemble over states which are each individually verifiable and yet hard to construct on average. Crucially, the ensemble must be superpolynomial in size. This is the basis of quantum lightning schemes. In lightning schemes, the average-case hardness assumption is used to argue that an efficient cloner for most states in the ensemble cannot exist.
Notice that a worst-case assumption would not suffice as it may only apply to a negligible fraction of the states in the ensemble. Since we do not know of average-case hardness statements for any $\NP$-complete problems, proving the hardness-of-cloning from an assumption like $\BQP \not \supseteq \NP$ appears to require fundamentally different techniques. 
One of the appealing properties of Kronecker coefficients which inspired our investigation is that the average-case complexity of computing Kronecker coefficients is unknown.

If we were able to prove an average-case statement for computing Kronecker coefficients, then Kronecker coefficient may lead to a new quantum lightning construction. This is because measuring the maximally entangled state $\ket{\Phi^+} \in \CC^{d_\mu d_\nu \times d_\mu d_\nu}$ according to the weak Fourier sampling POVM~(\Cref{fact:weak-fourier-sampling}) for representation $\rho^\mu \otimes \rho^\nu \otimes \II$ generates a state $\ket*{\Psi_\lambda^{(\Phi^+)}}$ (defined in~\eqref{eq:passing-improved-construction}) for a random $\lambda$. The probability of measuring irrep $\lambda$ can be calculated as $\frac{d_\lambda}{d_\mu d_\nu} m_{\mu \nu \lambda}$.

\section*{Acknowledgements}
We want to thank Sergey Bravyi for many fruitful discussions.

\bibliography{references}
\bibliographystyle{myhalpha}

\appendix

\section{Quick Reference}
\subsection{Representation Theory}

Most of the representation theory results used in this work stem from Schur's lemma:
\begin{fact}[Schur's lemma]
For any $1 \leq i_1, j_1 \leq d_{\lambda_1}$ and $1 \leq i_2, j_2, \leq d_{\lambda_2}$ for irreps $\lambda_1, \lambda_2$ of $G$,
\begin{equation}
    \sum_{g \in G} \rho^{\lambda_1}_{i_1j_1}(g)^* \rho^{\lambda_2}_{i_2 j_2}(g) = \frac{\abs{G}}{d_{\lambda_1}} \cdot \delta_{\lambda_1 \lambda_2} \delta_{i_1 i_2} \delta_{j_1 j_2}.
\end{equation}
\label{fact:schur}
\end{fact}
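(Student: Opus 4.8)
The plan is to deduce the orthogonality relations from the abstract form of Schur's lemma together with the standard group-averaging trick. Before anything else I would record two preliminary reductions. First, every representation of a finite group is equivalent to a unitary one, so I may take $\rho^{\lambda_1}$ and $\rho^{\lambda_2}$ unitary; then $\rho^{\lambda}(g)^{-1}=\rho^\lambda(g)^\dagger=\rho^\lambda(g^{-1})$, which entrywise says $\rho^{\lambda}_{ij}(g)^{*}=\rho^{\lambda}_{ji}(g^{-1})$. Using this, the left-hand side of the claimed identity rewrites as $\sum_{g\in G}\rho^{\lambda_1}_{j_1 i_1}(g^{-1})\,\rho^{\lambda_2}_{i_2 j_2}(g)$, a sum of honest matrix entries. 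Second, I would recall the abstract Schur's lemma: a $G$-equivariant linear map $T\colon V_{\lambda_2}\to V_{\lambda_1}$ between irreducible $\CC G$-modules is zero whenever $\lambda_1\neq\lambda_2$ (as $\ker T$ and $\mathrm{im}\,T$ are submodules), and equals $c\,\II$ for some scalar $c$ when $\lambda_1=\lambda_2$ (over $\CC$ an endomorphism of an irreducible module has an eigenvalue, and its eigenspace is a nonzero submodule, hence everything).

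The key step is to manufacture an intertwiner whose matrix entries are exactly the sums to be evaluated. Fixing the indices $i_1,i_2$, let $E_{i_1 i_2}$ be the elementary matrix regarded as a linear map $V_{\lambda_2}\to V_{\lambda_1}$, and set
\[
T \;\defeq\; \sum_{g\in G}\rho^{\lambda_1}(g^{-1})\,E_{i_1 i_2}\,\rho^{\lambda_2}(g).
\]
The substitution $g=g'h$ in the sum shows $\rho^{\lambda_1}(h)\,T=T\,\rho^{\lambda_2}(h)$ for every $h\in G$, so $T$ is an intertwiner and abstract Schur applies. Reading off the $(j_1,j_2)$ entry gives $T_{j_1 j_2}=\sum_{g}\rho^{\lambda_1}_{j_1 i_1}(g^{-1})\,\rho^{\lambda_2}_{i_2 j_2}(g)$, which is precisely the quantity of interest.

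It remains to evaluate $T$. When $\lambda_1\neq\lambda_2$, abstract Schur gives $T=0$, so every entry vanishes and the identity holds with both sides equal to $0$. When $\lambda_1=\lambda_2=\lambda$, I write $T=c\,\II$ and pin down $c$ by taking a trace: using cyclicity, $\tr T=\sum_{g}\tr\!\big(E_{i_1 i_2}\,\rho^\lambda(g)\rho^\lambda(g^{-1})\big)=\sum_{g}\tr(E_{i_1 i_2})=\abs{G}\,\delta_{i_1 i_2}$, while $\tr(c\,\II)=c\,d_\lambda$, hence $c=\frac{\abs{G}}{d_\lambda}\,\delta_{i_1 i_2}$. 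Therefore $T_{j_1 j_2}=\frac{\abs{G}}{d_\lambda}\,\delta_{i_1 i_2}\delta_{j_1 j_2}$, which is the stated formula once one sets $d_{\lambda_1}=d_\lambda$ and folds the two cases together with the factor $\delta_{\lambda_1\lambda_2}$.

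There is no genuine obstacle here — the statement is entirely classical. The only points that need care are (i) the reduction to unitary representations, which is what turns the conjugate $\rho^{\lambda_1}_{i_1 j_1}(g)^*$ into a matrix entry and makes the averaging construction land in the space of intertwiners, and (ii) the index bookkeeping when translating between the operator identities $T=0$, $T=c\,\II$ and the scalar identity on entries. If one wants a fully self-contained account, one must also invoke algebraic closedness of $\CC$ to know the endomorphism in abstract Schur has an eigenvalue; that is the single non-elementary ingredient and is standard.
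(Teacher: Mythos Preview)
Your proof is correct and is the classical textbook derivation of the Schur orthogonality relations: average an elementary matrix to produce an intertwiner, invoke the abstract form of Schur's lemma, and fix the scalar by a trace computation. The paper itself does not prove this statement; it records it as a standard fact in the ``Quick Reference'' appendix and uses it as a black box elsewhere. So there is no approach to compare against --- your argument supplies exactly the proof the paper omits, and nothing in it is in tension with how the result is used downstream.
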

A useful consequence of Schur's lemma is the following corollary about the orthogonality of characters. It follows by taking the sum over all $i_1 = j_1$ and $i_2 = j_2$. Then,
\begin{xalign}[eq:orthog-of-chars]
    \sum_{g \in G} \chi^{\lambda_1}(g)^* \chi^{\lambda_2}(g) &= \sum_{i=1}^{d_{\lambda_1}} \sum_{j = 1}^{d_{\lambda_2}} \sum_{g \in G} \rho^{\lambda_1}_{ii}(g)^* \rho^{\lambda_2}_{jj}(g) \\
    &= \sum_{i=1}^{d_{\lambda_1}} \sum_{j = 1}^{d_{\lambda_2}} \frac{\abs{G}}{d_{\lambda_1}} \cdot \delta_{\lambda_1 \lambda_2} \delta_{ij}. \\
    &= \abs{G} \cdot \delta_{\lambda_1 \lambda_2}.
\end{xalign}
We can also derive the following result:
\begin{xalign}[eq:ip-of-char-and-rep-v2]
    \sum_{g \in G} \chi^{\lambda_1}(g)^* \rho^{\lambda_2}(gh) &= \sum_{i = 1}^{d_{\lambda_1}} \sum_{j,k, \ell=1}^{d_{\lambda_2}} \sum_{g \in G} \rho_{ii}^{\lambda_1} (g)^* \rho_{jk}^{\lambda_2}(g) \rho
    _{k\ell}^{\lambda_2}(h) \ketbra{j}{\ell} \\
    &= \sum_{i,\ell} \frac{\abs{G}}{d_{\lambda_1}} \delta_{\lambda_1 \lambda_2} \rho_{i\ell}^{\lambda_2}(h) \ketbra{i}{\ell} \\
    &= \frac{\abs{G}}{d_{\lambda_1}} \delta_{\lambda_1 \lambda_2} \rho^{\lambda_2}(h).
\end{xalign}

\subsection{Vectorization calculations}
\label{subsec:vectorization-math}

For any $d \times d$ matrix $A$, let us define $\ket{\mathrm{vec} \ A} \defeq \sum_{i,j = 1}^d A_{ij} \ket{i}\ket{j}$. Then $\norm{\ket{\mathrm{vec} \ A}} = \norm{A}_F$,
the Frobenius norm of the matrix.
Additionally, the inner product between two vectorizations is equal to
\begin{equation}
    \bra{\mathrm{vec} \ A} \ket{\mathrm{vec} \ B} = \sum_{i,j = 1}^d A_{ij}^* B_{ij} = \sum_{ij}^d A^\dagger_{ji} B_{ij} = \tr(A^\dagger B) = \ev{A, B}_F.
\end{equation}
And for $B, C \in \CC^{d \times d}$, $B \otimes C \ket{\mathrm{vec}\ A} = \ket*{\mathrm{vec}\ BAC^\top}$.
\section{Omitted Proofs}
\label{sec:omitted-pfs}

\factwfs*

\begin{proof}[Proof of \Cref{fact:weak-fourier-sampling}]
By direct calculation, we can verify the orthogonality of the projectors in the POVM:
\begin{xalign}
    \Xi_\lambda \Xi_\mu &= \frac{d_\lambda d_\mu}{|G|^2} \sum_{g \in G} \sum_{h \in G} \chi^{\lambda}(g)^*\chi^{\mu}(h)^*  \sigma(gh) \\ &= \frac{d_\lambda d_\mu}{|G|^2} \sum_{g \in G} \sum_{\ell \in G} \chi^{\lambda}(g)^*\chi^{\mu}(g^{-1}\ell)^*  \sigma(\ell) \\
    &=  \frac{d_\lambda d_\mu}{|G|^2}  \sum_{\ell \in G} \sum_{i \in [d_\lambda]} \sum_{j,k \in [d_\lambda]} \left( \sum_{g \in G} \rho^{\lambda}(g)^*_{ii} \rho^{\mu}(g)_{kj} \right) \rho^{\mu}(\ell)^*_{kj}  \sigma(\ell) \\
    &= \delta_{\mu \lambda} \frac{d_\lambda}{|G|} \sum_{\ell \in G} \sum_{i \in [d_\lambda]} \sum_{j,k \in [d_\lambda]} \delta_{ik} \delta_{ij} \rho^{\lambda}(\ell)^*_{kj} \sigma(\ell) \\
    &= \delta_{\mu \lambda}  \frac{d_\lambda}{|G|} \sum_{\ell \in G}  \chi^{\lambda}(\ell)^* \sigma(\ell) \\
    &= \delta_{\mu \lambda}  \Xi_\lambda,
\end{xalign}
where the third and fourth rows follow from the definition of $\chi^\lambda$ and Schur's lemma~(\Cref{fact:schur}). This shows that $\lbrace \Xi_\lambda \rbrace_{\lambda \in \Ii_G }$ is a set of orthogonal projectors. Since $d_\lambda = \chi^{\lambda}(e)$,
\begin{xalign}
    \tr\qty(\sum_{ \lambda \in \Ii_G} \Xi_\lambda) &= \tr\qty( \sum_{\lambda \in \Ii_G} \sum_{g \in G} \frac{d_\lambda}{|G|} \chi^\lambda(g)^* \sigma(g)) \\
    &= \sum_{\lambda \in \Ii_G} \frac{d_\lambda}{\abs{G}} \cdot \frac{\abs{G}}{d_\lambda} \tr(\rho^\lambda(e)) \label{eq:simplification-of-char}\\
    &= \sum_{\lambda \in \Ii_G} m_{\sigma \lambda} d_\lambda = D. \label{eq:sum-of-chars}
\end{xalign}
We applied~\eqref{eq:ip-of-char-and-rep-v2} to achieve~\eqref{eq:simplification-of-char} and~\eqref{eq:sum-of-chars} follows from the orthogonal decomposition. 
By orthogonality and the trace equation, it follows that $\sum_{\lambda \in \Ii_G} \Xi_\lambda = \II_D$, and therefore this is a POVM over $D$ dimensions.
This measurement can be implemented efficiently whenever the quantum $G$-Fourier transform can be implemented efficiently. We use Harrow's generalized phase estimation algorithm\cite{harrow2005applicationscoherentclassicalcommunication}. This circuit uses two registers, a control and a target. The control register is initialized as thee uniform superposition over $G$, and the controlled action of $\rho$ is applied to the target register. The irrep label is then measured in the Fourier basis to conclude the computation: 

$$
\qquad \qquad \Qcircuit @C=1em @R=1.5em {
    & \lstick{\frac{1}{\sqrt{G}} \sum_{g \in G} \ket{g}} & {/} \qw & \ctrl{1} & \gate{\mathrm{FT}^\dagger} & \meter & \cw & \lambda \\
    & \lstick{\ket{\phi}} & {/} \qw & \gate{\rho} & \qw & \qw &  & \Xi_\lambda \ket{\psi}
}
$$

\noindent Similarly to~\cite[Theorem 1]{larocca2024quantumalgorithmsrepresentationtheoreticmultiplicities}, we can characterize the action of this circuit using a superoperator $E_\lambda^\dag (\cdot) E_\lambda$ with elements $E_\lambda$:
    \begin{align}
         E_\lambda &= \frac{1}{\sqrt{|G|}} \sum_{g \in G} \left( \Pi_\lambda \mathrm{FT}\ket{g} \right)_C \otimes \rho(g)_T,
    \end{align}
    where $\Pi_\lambda$ measures the irrep label in the Fourier basis:
    \begin{align}
        \Pi_\lambda := \ketbra{\lambda} \otimes \II_{d_\lambda^2} = \sum_{i,j \in [d_\lambda]} \ket{\lambda,i,j} \bra{\lambda, i,j} 
    \end{align}
    and $\mathrm{FT}$ is the Fourier transform for $G$ previously defined in~\eqref{eq:qft}. Therefore, 
 \begin{align}
     \Pi_\lambda \mathrm{FT} \ket{g} = \sum_{i,j \in [d_\lambda]} \sqrt{\frac{d_\lambda}{|G|}} \rho^\lambda_{ij}(g) ^*\ket{\lambda, i, j}
 \end{align} 
 We now show that: $E_\lambda^\dag E_\lambda = \Xi_\lambda$. 
 \begin{xalign}
     E_\lambda^\dag E_\lambda &= \frac{1}{|G|} \sum_{g,h \in G} \bra{h}{\mathrm{FT}^\dag \Pi_\lambda \mathrm{FT}}\ket{g} \rho(gh^{-1}) \\ 
     &= \frac{d_\lambda}{|G|^2} \sum_{g,h \in G} \sum_{i, j \in [d_\lambda]} \rho^{\lambda}_{ij}(h) \rho^\lambda_{ij}(g)^* \rho(gh^{-1}) \\
     &= \frac{d_\lambda}{|G|^2} \sum_{g, h \in G} \chi^\lambda(gh^{-1})^*\rho(gh^{-1}) = \frac{d_\lambda}{|G|} \sum_{g \in G} \chi^\lambda(g)^* \rho(g).
 \end{xalign}
 In particular, trace cyclicity then implies that that for any state $\sigma$: 
 \begin{align}
     \tr(E_\lambda \sigma E_\lambda^\dag) = \tr(\Xi_\lambda \sigma).
 \end{align}
 
\end{proof}

\factefficientprojector*

\begin{proof}[Proof of \Cref{fact:efficientprojector}]
We argue that we can efficiently implement the action of $\rho^\mu \otimes \rho^\nu$ in a controlled way and combine this with an efficient $S_n$ Fourier transform in \Cref{fact:weak-fourier-sampling}. We use the Young-Yamanouchi basis for $\rho^\mu$. Rows and columns are labeled by distinct semi-standard Young tableaux (SSYTs) of shape $\mu$ and the representation matrix $\rho^\mu(\sigma_i)$ for a transposition $\sigma_i$ was defined by James~\cite{james1984}\footnote{Additional exposition can be found in~\cite{jordan2009fastquantumalgorithmsapproximating}.} as :
   \begin{align}
       \rho^\mu(\sigma_i) \ket{k} &=  \frac{1}{\tau_i} \ket{k} + \delta[\ell \text{ is SSYT}]\sqrt{1-\frac{1}{\tau_i^2}} \ket{\ell},
       \label{Eq:Young-Yamanouchi}
   \end{align}
   where $\ell$ is a Young tableau obtained from $k$ by swapping $i$ and $i+1$;
   \begin{align}
       \delta[\ell \text{ is SSYT}] &= \begin{cases} 1 \text{ if } \ell \text{ is SSYT} \\ 0 \text{ otherwise } \end{cases}
   \end{align}
    and $\tau_i$ is the \emph{axial distance} of $i$ and $i+1$ in $k$: the number of steps between $i$ and $i+1$ counting each step up or to the right as $-1$ and each step to left or down as $+1$. Both $\tau_i$ and $\delta[\ell \in \text{SSYT}]$ can be computed in $\poly(n)$ time by a classical algorithm and implementing this reversibly gives a $\poly(n)$ quantum circuit for $\rho^\mu(\sigma_i)$. Additionally, we can efficiently implement the controlled version of this circuit. Since any permutation is a product of $O(n^2)$ nearest-neighbor transpositions, composition of the transposition circuits gives a poly(n) quantum circuit that implements the action of the irreducible matrix for an arbitrary permutation. Finally, $\sigma = \rho^\mu \otimes \rho^\nu$ can be implemented by two copies of such circuit that uses the same control register. The quantum $S_n$-Fourier transform can be implemented in polynomial time~\cite{beals97}. Combined with \Cref{fact:weak-fourier-sampling}, it follows that the measurement $\lbrace \Xi_\lambda \rbrace_{\lambda \vdash n}$ can be implemented efficiently for $\sigma = \rho^{\mu} \otimes \rho^{\nu}$.
\end{proof}

\lemitfiip*

\begin{proof}[Proof of~\Cref{lem:interior-test-for-irrep-identity-prod}]
Define the matrix $X$ of Frobenius norm 1 such that $\ket{\psi} = \ket{\mathrm{vec} \ X}$.
The state being tested is $\ket{\tau} = \frac{1}{\sqrt{\abs{G}}} \sum_{k \in G} \ket{k} \otimes \ket{\mathrm{vec} \ X}$. Then, by direct calculation,
\begin{xalign}
   \ev{U}{\tau} &= \frac{1}{\abs{G}} \sum_{k} \ev{\sigma(k) \otimes \sigma(k)^*}{\mathrm{vec} \ X} \\
   &= \tr \qty(X^\dagger \qty(\frac{1}{\abs{G}} \sum_k \sigma(k) X \sigma(\inv{k}))) \\
   &= \ev{X, \mathcal{E}(X)}_F
\end{xalign}
where we have defined
\begin{equation}
    \mathcal{E}(X) = \mathcal{E}_{\sigma}(X) \defeq \frac{1}{\abs{G}} \sum_{k \in G} \sigma(k) X \sigma(\inv{k}),
\end{equation}
The probability of the internal state testing protocol accepting the 1-bit phase estimation is
\begin{align}
\Pr\qty[\textrm{internal state test accepts } \ket*{\Psi_\mathrm{in}^{(\psi)}}] = \frac{1}{2} + \frac{1}{2} \abs{\ev{X, \mathcal{E}(X)}_F}^2 \label{eq:prob-internal-state-test-accepts}.
\end{align}
Now, we apply the decomposition of $\sigma = \II_m \otimes \sigma_1$. So, we express the matrix $X$ as
\begin{equation}
X = \sum_{i, j = 1}^m \ketbra{i}{j} \otimes a_{ij} X^{(ij)}
\end{equation}
such that $\tr(X^{(ij)}) \in \RR^+$ and $\norm{X^{(ij)}}_F = 1$. Since, we only consider $X$ such that $\norm{X}_F = 1$, This is equivalent to $\norm{A}_F = 1$ where $A$ is the matrix formed from coefficients $(a_{ij})$. Therefore,
\begin{xalign}
X^\dagger &= \sum_{ij} \ketbra{i}{j} \otimes  a_{ji}^\dagger {X^{(ji)}}^\dagger, \\
\text{and } \mathcal{E}_\sigma(X) &= \Exp_k \qty(\II_m \otimes \sigma_1(k)) X \qty(\II_m \otimes \sigma_1(\inv{k})) \\ &= \sum_{ij} \ketbra{i}{j} \otimes a_{ij} \mathcal{E}_{\sigma_1}(X^{(ij)}).
\end{xalign}
Therefore,
\begin{xalign}
X^\dagger \mathcal{E}_\sigma(X) &= \qty(\sum_{i_1 j_1} \ketbra{i_1}{j_1} \otimes a_{j_1 i_1}^\dagger {X^{(j_1 i_1)}}^\dagger) \qty(\sum_{i_2j_2} \ketbra{i_2}{j_2} \otimes a_{i_2 j_2} \mathcal{E}_{\sigma_1}(X^{i_2 j_2})) \\
&= \sum_{i_1, j_2} \ketbra{i_1}{j_2} \otimes \qty(\sum_k a_{ki_1}^\dagger a_{kj_2} {X^{(k i_1)}}^\dagger \mathcal{E}_{\sigma_1}(X^{(kj)})), \\
\text{and } \tr\qty(X^\dagger \mathcal{E}_\sigma(X)) &= \sum_{ij} \abs{a_{ij}}^2 {X^{(ij)}}^\dagger \mathcal{E}_{\sigma_1}(X^{(ij)}).
\end{xalign}
Therefore, the quantity of interest equals
\begin{xalign}
\ev{X, \mathcal{E}(X)}_F &= \tr \qty(X^\dagger \mathcal{E}(X)) \\
&= \sum_{ij} \abs{a_{ij}}^2 \ev{X^{(ij)}, \mathcal{E}_{\sigma_1}(X^{(ij)})}_F \\
&= \frac{1}{D_1} \sum_{ij} \abs{a_{ij}}^2 \tr(X^{(ij)})^2. \label{eq:trivial-commutator-consequence}
\end{xalign}
where \eqref{eq:trivial-commutator-consequence} follows since $\sigma_1$ has a trivial commutator and therefore
\begin{equation}
\mathcal{E}_{\sigma_1}(X^{(ij)}) = \frac{\tr(X^{(ij)})}{D_1} \II_{D_1}.
\end{equation}
We now will show that if $\abs{ \ev{X, \mathcal{E}(X)}_F} \approx 1$, then the state $\ket*{\mathrm{vec} \ X}$ is $\approx \ket*{\mathrm{vec} \ A} \otimes \ket{\Phi^+}$. 
Since, we assumed $\tr(X_{ij})$ was $\in \RR^+$, its closest unit vector proportional to $\II$ is $\II/\sqrt{D_1}$.
Then, as we are considering the Frobenius norm, it follows that
\begin{xalign}
    \norm{X^{(ij)} - \frac{1}{\sqrt{D_1}} \II_{D_1}}_F &\leq \arccos \qty( \ev{\frac{1}{\sqrt{D_1}} \II_{D_1}, X^{(ij)}}_F) \\
    &\leq 2 \sin \arccos(\ev{\frac{1}{\sqrt{D_1}} \II_{D_1}, X^{(ij)}}_F) \label{eq:sin-rule} \\
    &= 2 \sqrt{1 - \frac{\tr(X^{(ij)})^2}{D_1}}.
\end{xalign}
We use the fact that $x \leq 2 \sin x$ for small angles in \eqref{eq:sin-rule}. Therefore, the Frobenius distance between the $X$ and a product matrix can be calculated as
\begin{xalign}
\norm{X - A \otimes \frac{1}{\sqrt{D_1}} \II_{D_1}}_F &= \sqrt{ \sum_{\textbf{}ij} \norm{a_{ij} X^{(ij)} - a_{ij} \frac{1}{\sqrt{D_1}} \II_{D_1} }_F^2} \\
&= \sqrt{ \sum_{ij} \abs{a_{ij}}^2 \norm{X^{(ij)}- \frac{1}{\sqrt{D_1}} \II_{D_1}}_F^2} \\
&= 2 \sqrt{\sum_{ij} \abs{a_{ij}}^2 \qty(1 - \frac{\tr(X^{(ij)})^2}{D_1})} \\
&= 2 \sqrt{1 - \ev{X, \mathcal{E}(X)}_F} \\
&\leq 2 \sqrt{1 - \ev{X, \mathcal{E}(X)}_F^2} \label{eq:dist-from-product}
\end{xalign}
This will equal the distance between the states $\ket*{\mathrm{vec} \ X}$ and $\ket*{\mathrm{vec} \ A} \otimes \ket{\Phi^+}$. From \eqref{eq:prob-internal-state-test-accepts}, we see that if we pass the test with probability $1 - \eps$, then $1 - \ev{X, \mathcal{E}(X)}_F^2 \leq 2\eps$. Therefore,
\begin{equation}
    \norm{X - A \otimes \frac{1}{\sqrt{D_1}} \II_{D_1}}_F \leq  2 \sqrt{2\eps}.
\end{equation}
Converting the matrix equation over Frobenius norm to an equation over states and $\ell_2$-norm, completes the proof.
\end{proof}

\end{document}